\documentclass[11pt,reqno]{amsart}
\usepackage[a4paper,margin=27mm,headheight=15pt]{geometry}
\usepackage[T1]{fontenc}
\usepackage[utf8]{inputenc}
\usepackage{libertinus}
\usepackage{amssymb,mathtools}
\usepackage{thmtools}
\usepackage{thm-restate}
\usepackage{bm}
\usepackage{booktabs,longtable,tabularx,array}
\usepackage{enumitem}
\usepackage[dvipsnames]{xcolor}
\usepackage[most]{tcolorbox}
\usepackage{tikz}
\usetikzlibrary{arrows.meta,positioning,calc}
\usepackage{microtype}
\usepackage{hyperref}
\usepackage{bookmark}

\hypersetup{
  colorlinks=true,
  linkcolor=MidnightBlue,
  citecolor=ForestGreen,
  urlcolor=BrickRed,
}

\allowdisplaybreaks[3]
\setlist[itemize]{leftmargin=1.6em,itemsep=0.25em,topsep=0.35em}
\setlist[enumerate]{leftmargin=1.8em,itemsep=0.25em,topsep=0.35em}
\renewcommand{\arraystretch}{1.18}
\newcommand{\C}{\mathbb C}
\newcommand{\tr}{\operatorname{tr}}
\newcommand{\rank}{\operatorname{rank}}

\newcommand{\ket}[1]{\lvert #1\rangle}
\newcommand{\bra}[1]{\langle #1\rvert}

\declaretheorem[name=Theorem,numberwithin=section]{theorem}
\declaretheorem[name=Corollary,sibling=theorem]{corollary}
\declaretheorem[name=Lemma,sibling=theorem]{lemma}

\declaretheorem[style=remark,name=Remark,sibling=theorem]{remark}

\newcommand{\isnew}{\texorpdfstring{{\normalfont\bfseries[here]}}{[New]}}

\title[On the two-copy distillability of Werner states and a new partial trace inequality]
      {On the two-copy distillability of Werner states \\ and a new partial trace inequality}

\date{\today}

\author{Thomas C. Fraser$^1$}

\address{$^1$
Department of Mathematical Sciences,University of Copenhagen, Copenhagen, 2100, Denmark}

\email{tcf@math.ku.dk}

\author{Felix Huber$^2$}
\address{$^2$
Division of Quantum Information,
Institute of Informatics,
Faculty of Mathematics, Physics and Informatics,
University of Gdańsk,
Wita Stwosza 57, 80-308 Gdańsk, Poland
}
\email{felix.huber@ug.edu.pl}

\author{Bal\'azs Pozsgay$^3$}
\address{
$^3$MTA-ELTE "Momentum" Integrable Quantum Dynamics Research Group,
ELTE Eötvös Loránd University,
Pázmány Péter sétány 1/A, 1117 Budapest, Hungary}

\author{Istv\'an Vona$^{3,4}$}
\address{$^4$Holographic Quantum Field Theory Research Group,
HUN-REN Wigner Research Centre for Physics,
Konkoly-Thege Miklós u. 29-33, 1121 Budapest, Hungary
}
\email{vona.istvan@wigner.hun-ren.hu}

\keywords{Werner states, distillability, NPT bound entanglement, partial trace,
singular values}

\begin{document}

\begin{abstract}
Problem 5 in {\it Five Open Problems in Quantum Information Theory} [PRX Quantum 3, 010101 (2022)], asks whether the two-ququart Werner state $\varrho(4,-\tfrac12)$ is two-copy distillable, where $\varrho(d,\alpha)=(I+\alpha F)/(d^2+\alpha d)$. We answer it in the negative. To this end, we show the following stronger statement: for all $C\in M_{d_1d_2}(\mathbb{C})$ of rank at most $r \le d_1 d_2$,
$\mathrm{tr}_1(C)\|_F^2+\|\mathrm{tr}_2(C)\|_F^2 \le r\|C\|_F^2+\frac{1}{r}|\mathrm{tr}(C)|^2$. A result by Costa Rico on the equivalence of this inequality with two-copy undistillability at $r = 2$ then settles Problem 5: $\varrho(4,-\tfrac{1}{2})$ is {\em not} two-copy distillable. Furthermore, we show that $\varrho(d,\alpha)$ is two-copy undistillable for every $d\ge2$, if and only if $\alpha\ge-\tfrac{1}{2}$. Thus, the one and two-copy distillability regions of $\varrho(d,\alpha)$ coincide.
These results have been found and written up with AI tools, pointing towards a structural change affecting the field of quantum information and computation.
\end{abstract}

\thanks{
We thank Karol \.{Z}yczkowski for fruitful discussions.
TCF acknowledges financial support from VILLUM FONDEN via the QMATH Centre of Excellence (Grant No. 10059) and the Novo Nordisk Foundation (grant NNF20OC0059939 “Quantum for Life”).
FH was funded in whole or in part by the National Science Centre, Poland 2024/54/E/ST2/00451
and by the Polish National Agency for Academic Exchange under the Strategic Partnership Programme grant BNI/PST/2023/1/00013/U/00001.
For the purpose of Open Access,
the author has applied a CC-BY public copyright licence to any Author Accepted Manuscript (AAM) version arising from this submission.
BP and IV were supported by the Hungarian National Research, Development
and Innovation Office, NKFIH Grant No. K-145904.
}

\maketitle

\section{Introduction}\label{sec:starting-point}

\subsection{Werner states}
Let \((e_i)_{i\in[d]}\) be the standard basis of \(\C^d\), and let
\(E_{ij}:=e_ie_j^*\in M_d(\C)\) be the associated matrix units, so that
\((E_{ij})_{i,j\in[d]}\) is an orthonormal basis of \(M_d(\C)\) for the
inner product~\eqref{eq:ip-mat}: \(\langle E_{ij},E_{kl}\rangle_F
=\delta_{ik}\delta_{jl}\).  Let \(F\) denote the flip on \(\C^d\otimes\C^d\),
defined by \(F(x\otimes y)=y\otimes x\) and given by
\begin{equation}\label{eq:flip-def}
F=\sum_{i,j=1}^d E_{ij}\otimes E_{ji}\,.
\end{equation}
The operator $F$ is Hermitian and satisfies \(F^2=I\).  For \(\alpha\in[-1,1]\) the
\emph{Werner state} with parameter \(\alpha\) is
\begin{equation}\label{eq:werner-def}
\varrho(d,\alpha):=\frac{I_{d^2}+\alpha F}{d^2+\alpha d}\,,
\end{equation}
following the convention of~\cite[Eq.~(4)]{Pankowski2010}; \(\varrho_\alpha\) is
written for \(\varrho(d,\alpha)\) when \(d\) is clear.

It is known that the partial transpose \(\varrho(d,\alpha)^{T_B}\) is not positive semidefinite, i.e. has negative partial transpose (NPT),
if and only if
\(\alpha<-\tfrac{1}{d}\)~\cite{Werner1989}.
Thus, \(\varrho(4,-\tfrac12)\) is NPT.
For \(\alpha\ge-\tfrac{1}{d}\) the state is separable~\cite{Werner1989}.

\subsection{Entanglement distillation}

A bipartite state
\(\varrho\) is \emph{two-copy distillable} if some vector \(\psi\) of Schmidt rank at
most two across \(A_1A_2:B_1B_2\) satisfies
\(\bra{\psi}(\varrho^{T_B})^{\otimes2}\ket{\psi}<0\), where \(T_B\) is partial
transposition on the second subsystem of each copy~\cite{Horodecki1998, Pankowski2010}, and
\emph{two-copy undistillable} otherwise.

A state with positive partial transpose cannot be distilled: if
\(\varrho^{T_B}\succeq0\), then \((\varrho^{T_B})^{\otimes n}\succeq0\) for every
\(n\), so \(\bra{\psi}(\varrho^{T_B})^{\otimes n}\ket{\psi}\ge0\) for every \(\psi\)
and \(\varrho\) is undistillable at every number of copies.  Thus non-positivity of
the partial transpose (NPT)---the Peres criterion~\cite{Peres1996}---is {\it necessary}
for distillability~\cite{Horodecki1998}.
A {\it sufficient} condition is the reduction criterion:
any state violating the inequality
\(\varrho_A\otimes I-\varrho\succeq0\) or
\(I\otimes\varrho_B-\varrho\succeq0\), where \(\varrho_A=\tr_2(\varrho)\) and
\(\varrho_B=\tr_1(\varrho)\), is distillable~\cite{HorodeckiReduction1999}.

Between
the two lies a partial unresolved regime: whether \emph{every} NPT state is distillable is a longstanding open question.
Recall that \(\varrho(d,\alpha)\) is NPT if and only if
\(\alpha<-\tfrac{1}{d}\)~\cite{Werner1989}.
Furthermore, Werner states are one-copy undistillable if and only if
\(\alpha\ge-\tfrac12\)~\cite{Horodecki1998,DurCiracLewensteinBruss2000,CostaRicoWolf2025}.
Since one-copy
distillability trivially implies $k$-copy for every $k \geq 1$,
\(\varrho(d,\alpha)\) is in particular
two-copy distillable for every
\(\alpha<-\tfrac12\).

Regarding undistillability, Vianna and Doherty showed by numerical
computation that qutrit Werner states are two-copy undistillable on the whole
range \(-\tfrac12\le\alpha<-\tfrac13\)~\cite{Vianna2006}.
For general \(d\), Costa Rico and Wolf showed that Werner states are two-copy
undistillable for all \(\alpha\ge-\tfrac13\)~\cite{CostaRicoWolf2025}.

\bigskip
\begin{center}
\footnotesize
\renewcommand*{\arraystretch}{1.3}
\begin{tabular}{@{}l c c c c@{}}
\toprule
 & \(\alpha<-\tfrac12\) & \(-\tfrac12\le\alpha<-\tfrac13\) & \(-\tfrac13\le\alpha<-\tfrac{1}{d}\) & \(\alpha\ge-\tfrac{1}{d}\)\\
 & & (\(d\ge3\) only) & (\(d\ge4\) only) & \\
\midrule
PPT~\cite{Peres1996} & N & N & N & Y\\
NPT~\cite{Peres1996} & Y & Y & Y & N\\
1-copy distillable~\cite{Horodecki1998,DurCiracLewensteinBruss2000,CostaRicoWolf2025} & Y & N & N & N\\
2-copy distillable & Y & N {\bf [here]}\(^{\dagger}\) & N~\cite{CostaRicoWolf2025}\(^{\ddagger}\)
& N~\cite{Werner1989}\\
\bottomrule
\end{tabular}

\smallskip
{\footnotesize
\(^{\dagger}\)New for \(d\ge4\); the qutrit case \(d=3\) is due to Vianna and
Doherty~\cite{Vianna2006}.
\\\(^{\ddagger}\) Vacuous for \(d=3\), where this range contains no NPT states.}
\end{center}

\subsection{The open problem}
The concrete question addressed here is the following, listed as Problem~5 in
Horodecki, Rudnicki and \.Zyczkowski's \emph{Five Open Problems in Quantum
Information Theory}~\cite{FiveOpen2022}.

\begin{restatable}[Problem~5 of~\cite{FiveOpen2022}]{problem}{probFive}\label{prob:five}
Decide whether the Werner state \(\varrho(4,-\tfrac12)\) of two ququarts, \(d=4\),
defined in Eq.~\eqref{eq:werner-def}, is two-copy distillable.
\end{restatable}

The answer given in this work is negative [shown in Cor.~B].
More generally, we ask:

\begin{restatable}{problem}{probGeneral}\label{prob:five-point-one}
For which \(d\) and \(\alpha\) is \(\varrho(d,\alpha)\) two-copy distillable?
\end{restatable}

We show that for every \(d \ge 2\),
the Werner state \(\varrho(d, \alpha)\) is
two-copy undistillable if and only if \(\alpha \ge  - \tfrac{1}{2}\) [shown in Cor.~C].
Thus every Werner state is two-copy undistillable if and only if it is one-copy undistillable.

\bigskip
{\it The proof presented in this work was found with ChatGPT Sol-5.6
and refined with Claude Opus and Fable.
All results have been carefully checked by the authors.}
\bigskip

{\it
During preparation of this manuscript
we became aware of a similar work by Fu et al,
\href{https://arxiv.org/abs/2607.21367}{arXiv:2607.21367}
and a manuscript in preparation by
Kishor Bharti, Rishikesh Gajjala, and Tobias Haug.
They seem to use a different proof strategy but also use a variant of Lemma~\ref{lem:esidentity}.
}

\section{Undistillability from a partial trace inequality}
To answer Problems~\ref{prob:five} and \ref{prob:five-point-one}, we will make use of the following partial trace inequality.

\begin{theorem}[Partial-trace criterion; Costa Rico~{\cite[Theorem~1]{CostaRico2025}}, in the norm form of Costa Rico and Wolf~{\cite[Eq.~(31)]{CostaRicoWolf2025}}]\label{thm:criterion-alt}
Let \(d\ge2\) and \(\alpha\in[-1,0)\).  The Werner state \(\varrho(d,\alpha)\) is
two-copy undistillable if and only if every \(C\in M_{d^2}(\C)\) with
\(\rank(C)\le2\) satisfies
\begin{equation}\label{eq:crw-criterion}
\|\tr_1(C)\|_F^2+\|\tr_2(C)\|_F^2
\le\frac1{|\alpha|}\|C\|_F^2+|\alpha|\,|\tr(C)|^2\,.
\end{equation}
\end{theorem}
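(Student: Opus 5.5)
The plan is to translate the quadratic form \(\bra{\psi}(\varrho^{T_B})^{\otimes 2}\ket{\psi}\) directly into matrix language. We identify a vector \(\psi\in(\C^d\otimes\C^d)_{A_1A_2}\otimes(\C^d\otimes\C^d)_{B_1B_2}\) with a matrix \(C\in M_{d^2}(\C)\) via \(\psi=\sum_{a,b} C_{ab}\, e_a\otimes e_b\), where \(a=(a_1,a_2)\) indexes \(A_1A_2\) and \(b=(b_1,b_2)\) indexes \(B_1B_2\). Under this identification the Schmidt rank of \(\psi\) across \(A_1A_2:B_1B_2\) equals \(\rank(C)\). The map \(\psi\mapsto C\) is a bijection, so the rank-\(\le 2\) vectors correspond exactly to the rank-\(\le 2\) matrices, and \(\|\psi\|^2=\|C\|_F^2\).

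The first step is the partial transpose of a single copy. From \eqref{eq:flip-def}, \(F^{T_B}=\sum_{i,j}E_{ij}\otimes E_{ij}=\Phi\Phi^*\), where \(\Phi=\sum_i e_i\otimes e_i\) is the unnormalized maximally entangled vector. Hence
\[
\varrho(d,\alpha)^{T_B}=\frac{I+\alpha\,\Phi\Phi^*}{d^2+\alpha d},
\]
and the prefactor is strictly positive for \(\alpha\in[-1,0)\) and \(d\ge2\). For two copies we expand
\[
(I+\alpha\Phi_1\Phi_1^*)\otimes(I+\alpha\Phi_2\Phi_2^*)
= I+\alpha\,\Phi_1\Phi_1^*\otimes I+\alpha\, I\otimes\Phi_2\Phi_2^*+\alpha^2\,\Phi_1\Phi_1^*\otimes\Phi_2\Phi_2^*,
\]
where \(\Phi_k\) lives on \(A_kB_k\).

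The second step evaluates each term on \(\psi\), after reordering the factors from \(A_1B_1A_2B_2\) to \(A_1A_2B_1B_2\). Contracting \(\psi\) with \(\Phi_1^*\) sets \(a_1=b_1\) and sums over it, which produces the matrix \(\sum_{i}C_{(i,a_2),(i,b_2)}\), i.e.\ \(\tr_1(C)\) regarded as an operator from \(B_2\) to \(A_2\). Therefore \(\bra{\psi}\Phi_1\Phi_1^*\otimes I\ket{\psi}=\|\tr_1(C)\|_F^2\). Symmetrically, the \(\Phi_2\) term gives \(\|\tr_2(C)\|_F^2\), and the full contraction gives \(|\tr(C)|^2\). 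Thus, up to the positive constant \((d^2+\alpha d)^{-2}\),
\[
\bra{\psi}(\varrho^{T_B})^{\otimes2}\ket{\psi}\;\propto\;\|C\|_F^2+\alpha\bigl(\|\tr_1 C\|_F^2+\|\tr_2 C\|_F^2\bigr)+\alpha^2|\tr C|^2 .
\]

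The final step uses \(\alpha=-|\alpha|\) and divides by \(|\alpha|>0\). This shows that the expression is nonnegative if and only if \eqref{eq:crw-criterion} holds for \(C\). Two-copy undistillability means nonnegativity for every \(\psi\) of Schmidt rank at most two, which by the bijection of the first paragraph is the same as \eqref{eq:crw-criterion} for every \(C\) with \(\rank C\le 2\). This establishes both directions at once.

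The main obstacle is purely bookkeeping: keeping track of the reordering of tensor factors and confirming that each contraction with \(\Phi_k^*\) is really a partial trace of \(C\), rather than a partial trace of \(\bar C\) or of a partially transposed \(C\). I would check this on matrix units \(C=E_{ab}\). Any leftover complex conjugation or transpose does not change Frobenius norms or \(|\tr C|\), so the resulting inequality is unaffected by such convention choices.
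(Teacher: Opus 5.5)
Your proof is correct and complete. Three points check out. First, $F^{T_B}=\sum_{i,j}E_{ij}\otimes E_{ij}$ is the rank-one projector onto the unnormalized maximally entangled vector. Second, the normalization $d^2+\alpha d=d(d+\alpha)\ge d(d-1)>0$ is positive on $[-1,0)$. Third, contracting $\psi=\sum_{a,b}C_{ab}\,e_a\otimes e_b$ against the maximally entangled vector on $A_1B_1$ (resp. $A_2B_2$) produces exactly the coefficient matrix $\tr_1(C)$ (resp. $\tr_2(C)$) in the convention of \eqref{eq:2-5}--\eqref{eq:2-6}. No conjugate or transpose appears there, so the hedge in your last paragraph is unnecessary.

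The paper does not derive this statement itself. It imports Costa Rico's Theorem~1 in the quadratic form $q_\alpha(C)=\|C\|_F^2+\alpha(\|\tr_1(C)\|_F^2+\|\tr_2(C)\|_F^2)+\alpha^2|\tr(C)|^2\ge0$. It then carries out only your final step: set $\alpha=-t$ and divide by $t>0$. Your first two steps are therefore a self-contained reconstruction of the cited result rather than a rival argument.

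What your route buys is independence from the literature. It also gives a direct check that the conventions match the ones Theorem~A is later fed into: Schmidt rank equals $\rank(C)$, the partial traces appear in the convention of \eqref{eq:2-5}--\eqref{eq:2-6}, and the prefactor is positive. What the citation records in addition is the $n$-copy version. Your method gives that too, by expanding $\bigotimes_k(I+\alpha P_k)$ over subsets of copies.

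One cosmetic point: you call the maximally entangled vector $\Phi$, which clashes with the paired partial-trace map $\Phi$ of \eqref{eq:5-1}. Rename it if this is to sit in the paper.
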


Costa Rico~\cite{CostaRico2025} states the criterion for \(n\) copies and in the
equivalent quadratic form \(q_\alpha(C)\ge0\), where
\(q_\alpha(C)=\|C\|_F^2+\alpha(\|\tr_1(C)\|_F^2+\|\tr_2(C)\|_F^2)+\alpha^2|\tr(C)|^2\);
only \(n=2\) is used here.  Writing \(\alpha=-t\) with \(t>0\) and dividing
\(q_{-t}(C)\ge0\) by \(t\) gives the norm form~\eqref{eq:crw-criterion} stated
above, which is the form recorded by Costa Rico and
Wolf~\cite[Eq.~(31)]{CostaRicoWolf2025} and the one applied here.

Costa Rico and Wolf~\cite{CostaRicoWolf2025} prove that~\eqref{eq:crw-criterion} holds for every rank-two \(C\) when
\(\alpha\ge -\tfrac13\), and thus Werner states are two-copy undistillable in this range. Our aim is to show that this range extends to \(\alpha = - \tfrac{1}{2}\).

\section{Contributions}\label{statements-to-be-proved}

We prove the required inequality, \eqref{eq:crw-criterion} at
\(\alpha=-\tfrac12\), in the following stronger form which is
valid for every rank and local dimensions \(d_1, d_2\).
Note that no assumption on positive-semidefiniteness, hermiticity, or normality is made on \(C\).

\begin{restatable}[Rank-constrained partial-trace inequality]{theoremA}{thmRankPT}
Let \(d_1,d_2,r\) be positive integers with \(1\le r\le d_1d_2\), and let
\(C\in M_{d_1d_2}(\C)\) have rank at most \(r\).  Then
\begin{equation}\label{eq:7-1}
\|\tr_1(C)\|_F^2+\|\tr_2(C)\|_F^2\le r\|C\|_F^2+\frac1r|\tr(C)|^2 \,.
\end{equation}
\end{restatable}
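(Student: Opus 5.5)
The plan is to reduce \eqref{eq:7-1} to a statement about a family of $d_1\times d_2$ matrices, prove a sharp rank-one inequality, and then lift it to rank $r$ by a block/averaging argument.

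\textbf{Step 1: reshaping.} Write the singular value decomposition $C=\sum_{k=1}^r x_k y_k^*$, absorbing the singular values into the $x_k$. Reshape each $x_k,y_k\in\C^{d_1}\otimes\C^{d_2}$ into $d_1\times d_2$ matrices $X_k,Y_k$. Then, up to transposes and conjugations that do not affect Frobenius norms,
\[
\tr_2(C)=\sum_k X_kY_k^*,\qquad \tr_1(C)=\sum_k \bigl(Y_k^*X_k\bigr)^{T},\qquad \tr(C)=\sum_k\langle Y_k,X_k\rangle_F .
\]
Since $\langle x_k,x_l\rangle=0$ for $k\ne l$, and likewise for the $y_k$, we also have $\|C\|_F^2=\sum_k\|X_k\|_F^2\|Y_k\|_F^2$. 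The inequality to prove therefore becomes
\[
\Bigl\|\sum_k X_kY_k^*\Bigr\|_F^2+\Bigl\|\sum_k Y_k^*X_k\Bigr\|_F^2\le r\sum_k\|X_k\|_F^2\|Y_k\|_F^2+\frac1r\Bigl|\sum_k\langle Y_k,X_k\rangle\Bigr|^2 .
\]

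\textbf{Step 2: the rank-one case $r=1$.} Here the claim is the matrix inequality
\[
\|XY^*\|_F^2+\|Y^*X\|_F^2\le\|X\|_F^2\|Y\|_F^2+|\tr(Y^*X)|^2 .
\]
I would prove this by writing $\|XY^*\|_F^2=\tr(X^*X\,Y^*Y)$ and $\|Y^*X\|_F^2=\tr(XX^*\,YY^*)$, and using the singular value decomposition of $X$ to reduce to $X$ diagonal with entries $\sigma_i$. Both sides then become explicit quadratic forms in the entries $y_{ij}$ of $Y$. The difference RHS$-$LHS should be a sum of squares of the form
\[
\tfrac12\sum_{i\neq j,\,k\ne l}\bigl|\sigma_i y_{kl}-\sigma_k y_{il}\bigr|^2\text{-type terms},
\]
i.e.\ a Lagrange-identity argument similar to the proof of the Böttcher--Wenzel inequality. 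I expect this to be the identity the authors call Lemma~\ref{lem:esidentity}.

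\textbf{Step 3: lifting to rank $r$.} Expand the left-hand side into diagonal terms $k=l$ and cross terms $k\ne l$. The diagonal terms are controlled by Step 2, yielding $\sum_k\|X_k\|^2\|Y_k\|^2+\sum_k|\langle Y_k,X_k\rangle|^2$. For the cross terms $2\operatorname{Re}\bigl[\tr(X_kY_k^*Y_lX_l^*)+\tr(Y_k^*X_kX_l^*Y_l)\bigr]$, I would use a polarized version of the Step 2 identity: apply it to $X=\sum_k\omega_kX_k$ and $Y=\sum_k\bar\omega_kY_k$ and average over phases $\omega$, or simply apply the bilinear form of the sum-of-squares identity. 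This should bound the cross terms by
\[
(r-1)\sum_k\|X_k\|^2\|Y_k\|^2+\Bigl(\tfrac1r\bigl|\textstyle\sum_k\langle Y_k,X_k\rangle\bigr|^2-\sum_k|\langle Y_k,X_k\rangle|^2\Bigr)
\]
plus a nonnegative slack. Two ingredients are used here: the orthogonality relations from Step 1, namely $\tr(X_k^*X_l)=0$ and $\tr(Y_k^*Y_l)=0$ for $k\ne l$, and Cauchy--Schwarz in the form $\sum_k|a_k|^2\ge\frac1r|\sum_k a_k|^2$. The latter is exactly where the factor $\frac1r$ enters, trading the loss in the trace term against the gain $r$ in the Frobenius term.

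\textbf{Main obstacle.} The hardest step is the cross-term estimate in Step 3. A naive triangle inequality gives $\|\sum_k X_kY_k^*\|^2\le r\sum_k\|X_kY_k^*\|^2$ for each partial trace separately. That loses the coupling between the two partial traces, which is precisely what makes the trace term appear with coefficient $\frac1r$.

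The first thing I would try is to treat the $r$-tuple as a single block pair: $\mathbf X=(X_1,\dots,X_r)$ and $\mathbf Y=(Y_1,\dots,Y_r)$ as $d_1\times rd_2$ matrices, so that $\tr_2(C)=\mathbf X\mathbf Y^*$. I would then apply the Step 2 identity to the pair $(\mathbf X,\mathbf Y)$ in a suitably chosen orientation, tracking the extra off-diagonal blocks that appear. If that fails, the fallback is to minimize RHS$-$LHS over the compact set of rank-$\le r$ matrices with $\|C\|_F=1$. From the stationarity conditions I would show that any extremizer has $\tr_1(C)$ and $\tr_2(C)$ proportional to identities on the support, and I would then check the equality cases. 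An example is $C$ a rank-$r$ projection of product form, where equality holds when $r\le\min(d_1,d_2)$.
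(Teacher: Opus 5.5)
Your Steps~1 and~2 are fine; Step~2 is the known Lemma~\ref{lem:rankonept}, with certificate $4\|(P_A^-\otimes P_B^-)(x\otimes y)\|^2$. Step~3 has a genuine gap, and it sits exactly where the factor $\frac1r$ has to come from. After Step~2 the diagonal terms contribute $\sum_k|a_k|^2$, where $a_k:=y_k^*x_k$, and
\[
\sum_k|a_k|^2=\frac1r\Bigl|\sum_k a_k\Bigr|^2+\frac1r\sum_{k<l}|a_k-a_l|^2 .
\]
So the Cauchy--Schwarz inequality you invoke points the wrong way. The diagonal estimate overshoots the target $\frac1r|\tr(C)|^2$, so your cross-term bound would have to be
$2\operatorname{Re}\sum_{k<l}\langle\Phi(x_ky_k^*),\Phi(x_ly_l^*)\rangle\le(r-1)\|C\|_F^2-\frac1r\sum_{k<l}|a_k-a_l|^2$.
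That is, it must supply a negative correction controlled by the spread of the $a_k$, and in the SVD frame nothing forces the $a_k$ to be equal. None of your tools produces such a term:
\begin{itemize}
\item With $Y=\sum_k\bar\omega_kY_k$ as written, $XY^*=\sum_{k,l}\omega_k\omega_lX_kY_l^*$ does not reproduce $\tr_2(C)$; its phase average is $0$.
\item With $Y=\sum_k\omega_kY_k$, averaging the rank-one inequality gives $\|\Phi(C)\|^2+\sum_{k\neq l}\|\Phi(x_ky_l^*)\|^2\le\bigl(\sum_k\|x_k\|^2\bigr)\bigl(\sum_k\|y_k\|^2\bigr)+|\tr(C)|^2+\sum_{k\neq l}|y_l^*x_k|^2$. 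Here $|\tr(C)|^2$ carries coefficient $1$, not $\frac1r$. Your block-pair idea has the same defect, since it is again a rank-one inequality on a larger space.
\item The polarized sum-of-squares identity, summed over $k,l$, is exact: $\|\Phi(C)\|^2=\|C\|_F^2+|\tr(C)|^2-\sum_{k,l}\langle x_k\otimes y_l,[(I-F_A)\otimes(I-F_B)](x_l\otimes y_k)\rangle$. Its $k\neq l$ terms have no sign, so it restates the problem rather than solving it.
\item Crossed polarization plus Cauchy--Schwarz controls only absolute values.
\end{itemize}
Concretely, take $C=e_1e_1^*\otimes(f_1f_1^*-f_2f_2^*)$ with the SVD $x_1=y_1=e_1\otimes f_1$, $x_2=-y_2=e_1\otimes f_2$. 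Then $a_1=1$, $a_2=-1$, and the diagonal bound already equals $2\|C\|_F^2+\frac12|\tr(C)|^2=4$. You would therefore need the cross term to be $\le0$, whereas crossed polarization plus Cauchy--Schwarz gives only $\le2$ (the true value is $-2$). The variational fallback is a plan, not an argument.

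The missing idea is to abandon the SVD frame.
\begin{itemize}
\item The paper splits the singular values symmetrically, $X=U\Sigma^{1/2}$, $Y=V\Sigma^{1/2}$. It then replaces $(X,Y)$ by $(XQ,YQ)$, where $Q$ is a unitary from the Parker--Fillmore theorem (Lemma~\ref{lem:parker}) such that $Q^*(Y^*X)Q$ has constant diagonal $\tau=\tr(C)/r$ (Lemma~\ref{lem:balanced}).
\item Now every $a_k=\tau$, so the diagonal terms give exactly $\sum_i\delta_i^2+\frac1r|\tr(C)|^2$, with $\delta_i=\|x_i\|^2=\|y_i\|^2$.
\item Orthogonality is lost, but because of the symmetric split both Gram matrices equal $G=Q^*\Sigma Q$.
\item The four-state identity (Lemma~\ref{lem:esidentity}) gives $\langle\Phi(x_iy_i^*),\Phi(x_jy_j^*)\rangle=\langle\Phi(y_jy_i^*),\Phi(x_jx_i^*)\rangle$. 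This identity is not your Step~2 inequality; it is exactly the cross-term tool. Cauchy--Schwarz and the rank-one lemma then bound the cross term by $\delta_i\delta_j+|g_{ij}|^2$.
\item Since $\|C\|_F^2=\tr(G^2)=\sum_i\delta_i^2+2\sum_{i<j}|g_{ij}|^2$, what remains is $\sum_{i<j}(\delta_i-\delta_j)^2+2(r-1)\sum_{i<j}|g_{ij}|^2\ge0$.
\end{itemize}
In the example above, $Q=\frac1{\sqrt2}\bigl(\begin{smallmatrix}1&1\\1&-1\end{smallmatrix}\bigr)$ makes $a_1=a_2=0$, and the estimate closes.

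The symmetric split matters. With your $x_k=\mu_ku_k$, $y_k=v_k$, the rotated Gram matrices are $Q^*\Sigma^2Q$ and $I$, and the same chain of estimates fails to close. Already for $C=ww^*$ with $\|w\|=1$ and $r=2$, it yields only $\tfrac32+\sqrt2$, which exceeds the target $\tfrac52$.
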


Theorem~A is proved in Section~\ref{proof-of-the-rank-constrained-partial-trace-inequality}.
By Theorem~\ref{thm:criterion-alt} follows:
\begin{restatable}{corollaryB}{corProblemFive}
The two-ququart Werner state \(\varrho(4,-\tfrac12)\) is {\em not} two-copy distillable.
\end{restatable}

In fact, we can deduce from Theorem~A also:

\begin{restatable}{corollaryC}{corAllDWerner}
For every \(d\ge2\), the Werner state \(\varrho(d,\alpha)\) is two-copy undistillable
if and only if \(\alpha\ge-\tfrac12\).
\end{restatable}

\section{Overview}\label{sec:overview}

\subsection{Dependency table}

Below we list, for every item, its provenance and dependencies.  We cite the original reference for items that predate this
note (these are collected in Section~\ref{sec:known}), and
highlight with \isnew\ for results established in this work.

\renewcommand*{\arraystretch}{1.2}
\begingroup\small
\begin{longtable}{@{}p{0.095\textwidth}p{0.27\textwidth}p{0.09\textwidth}p{0.175\textwidth}p{0.25\textwidth}@{}}
\toprule
Item & Name & Source & Depends on & Used in\\
\midrule
\endfirsthead
\toprule
Item & Name & Source & Depends on & Used in\\
\midrule
\endhead
Lem.~\ref{lem:parker} & Parker--Fillmore constant diagonal form & \cite{Parker1948,Fillmore1969} & --- & Lem.~\ref{lem:balanced}\\
Lem.~\ref{lem:svd} & singular-value decomposition & \cite{HornJohnson1991,Bhatia1997} & spectral theorem & Lem.~\ref{lem:balanced}\\
Lem.~\ref{lem:lagrange} & Lagrange (variance) identity & \cite{HardyLittlewoodPolya1934} & --- & Thm.~A, Step~6\\
Lem.~\ref{lem:rankonept} & rank-one partial-trace inequality & \cite{CostaRico2025,CostaRicoWolf2025} & \eqref{eq:2-5}, \eqref{eq:2-6}, \eqref{eq:2-11}, \eqref{eq:2-12} & Thm.~A, Steps~2 and~3\\
Lem.~\ref{lem:psd} & Theorem~A for positive matrices & \cite{Audenaert2007} & \eqref{eq:rank-trace-proof} & not used; delimits the elementary regime\\
Lem.~\ref{lem:esidentity} & four-state marginal identity & \cite{EltschkaSiewert2020} & quoted; or~\eqref{eq:2-11}, \eqref{eq:2-12} & Cor.~\ref{cor:crossedpol}\\
\midrule
Cor.~\ref{cor:crossedpol} & crossed polarization & \isnew & Lem.~\ref{lem:esidentity} & Thm.~A, Step~3\\
Lem.~\ref{lem:balanced} & balanced rank-\(r\) decomposition & \isnew & Lems.~\ref{lem:svd}, \ref{lem:parker} & Thm.~A\\
Thm.~A & rank-constrained partial-trace inequality & \isnew & Lems.~\ref{lem:balanced}, \ref{lem:rankonept}, \ref{lem:lagrange}; Cor.~\ref{cor:crossedpol} & Cors.~B and~C\\
Cor.~B & Problem~\ref{prob:five}: \(\varrho(4,-\tfrac12)\) is two-copy undistillable & \isnew & Thm.~A at \(d_1=d_2=4\), \(r=2\); Thm.~\ref{thm:criterion-alt} & --- (the target)\\
Cor.~C & Problem~\ref{prob:five-point-one} target & \isnew & Thm.~A at \(r=2\); \eqref{eq:rank-trace-proof}; Thm.~\ref{thm:criterion-alt}; separability/NPT criterion~\cite{Werner1989} & ---\\
\bottomrule
\end{longtable}
\endgroup

\subsection{Dependency graph}

The same information in graphical form; arrows point from an ingredient to the
result that uses it.  Shaded nodes are known,
framed nodes are established here.

\bigskip
\begin{center}
\begin{tikzpicture}[
  x=1cm, y=1cm,
  kn/.style={draw=gray!55!black, fill=gray!10, rounded corners=1.5pt,
             inner sep=2.5pt, align=center, font=\scriptsize,
             text width=27mm, minimum height=9mm},
  nw/.style={draw=MidnightBlue!80!black, line width=0.7pt, fill=blue!6,
             rounded corners=1.5pt, inner sep=2.5pt, align=center,
             font=\scriptsize\bfseries, text width=27mm, minimum height=9mm},
  ar/.style={-{Stealth[length=1.6mm,width=1.3mm]}, gray!45!black, line width=0.4pt}
]
\node[kn, text width=33mm] (park)  at (-6.3, 0)    {Lemma~\ref{lem:parker}\\Parker--Fillmore diagonal};
\node[kn] (svd)   at (-3.0, 0)    {Lemma~\ref{lem:svd}\\SVD};
\node[kn] (esid)  at ( 0.0, 0)    {Lemma~\ref{lem:esidentity}\\four-state identity};
\node[nw] (cross) at ( 0.0, -1.6) {Cor.~\ref{cor:crossedpol}\\crossed polarization};
\node[kn] (ropt)  at ( 3.0, 0)    {Lemma~\ref{lem:rankonept}\\rank-one partial trace};
\node[kn] (lagr)  at ( 6.0, 0)    {Lemma~\ref{lem:lagrange}\\Lagrange identity};

\node[nw, text width=33mm] (bal)   at (-4.65,-2.3)  {Lemma~\ref{lem:balanced}\\balanced decomposition};
\node[nw, text width=43mm] (thma)  at ( 0.0,-4.6)  {Theorem~A\\rank-constrained partial trace};

\node[nw] (p5)    at (-2.6,-6.9)  {Corollary~B\\Problem~\ref{prob:five} target};
\node[nw] (alld)  at ( 2.6,-6.9)  {Corollary~C\\
Problem~\ref{prob:five-point-one} target};

\draw[ar] (svd)  -- (bal);
\draw[ar] (park) -- (bal);
\draw[ar] (esid) -- (cross);
\draw[ar] (bal)  -- (thma);
\draw[ar] (ropt) -- (thma);
\draw[ar] (cross)-- (thma);
\draw[ar] (lagr) -- (thma);
\draw[ar] (thma) -- (p5);
\draw[ar] (thma) -- (alld);
\end{tikzpicture}
\end{center}

\bigskip

\section{Conventions and definitions}\label{conventions-and-definitions}

\subsection{Index sets and matrix spaces}\label{index-sets-and-matrix-spaces}

For a positive integer \(d\), write \([d]:=\{1,2,\ldots,d\}\).
For positive integers \(m,n\), let \(M_{m,n}(\C)\) denote the vector space of \(m\times n\) complex matrices. Write \(M_n(\C):=M_{n,n}(\C)\).
The conjugate transpose of a matrix \(T\) is denoted by \(T^*\).
The ordinary transpose is denoted by \(T^T\). Entrywise complex conjugation is denoted by \(\overline T\).

\subsection{Inner products}\label{inner-products}

For column vectors \(x,y\in\C^n\) the inner product and norm are
\begin{equation}\label{eq:ip-vec}
\langle x,y\rangle:=x^*y=\sum_{i=1}^n\overline{x_i}\,y_i\,,
\qquad
\|x\|:=\sqrt{\langle x,x\rangle}=\Bigl(\sum_{i=1}^n|x_i|^2\Bigr)^{1/2}\,.
\end{equation}
For matrices \(S,T\in M_{m,n}(\C)\) the Frobenius, or Hilbert--Schmidt, inner product
and norm are
\begin{equation}\label{eq:ip-mat}
\langle S,T\rangle_F:=\tr(S^*T)=\sum_{i=1}^m\sum_{j=1}^n\overline{S_{ij}}\,T_{ij}\,,
\qquad
\|T\|_F:=\sqrt{\langle T,T\rangle_F}
=\Bigl(\sum_{i=1}^m\sum_{j=1}^n|T_{ij}|^2\Bigr)^{1/2}\,.
\end{equation}
Both are conjugate-linear in the first argument and linear in the second.
A third pairing on a direct sum of two matrix spaces, appears
in~\eqref{eq:5-2}; it is again the same one, since the pair \((R_1,R_2)\) may be read
as the block-diagonal matrix \(\operatorname{diag}(R_1,R_2)\), for which
\(\tr\bigl(\operatorname{diag}(R_1,R_2)^*\operatorname{diag}(S_1,S_2)\bigr)
=\langle R_1,S_1\rangle_F+\langle R_2,S_2\rangle_F\).  No new notation is therefore
introduced for it.

\subsection{Rank-one operators}\label{rank-one-operators}

For column vectors \(x,y\in\C^N\), the matrix \(xy^*\)
is the rank-at-most-one operator defined by \((xy^*)z=x(y^*z)\).
Its trace and Frobenius norm satisfy
\begin{align}
\tr(xy^*)&=y^*x\,, \label{eq:rank1-trace}\\
\|xy^*\|_F^2&=\|x\|^2\|y\|^2\,. \label{eq:rank1-norm}
\end{align}
Note that the left-hand side of~\eqref{eq:rank1-norm} is a matrix norm and the
right-hand side a product of vector norms; the subscript \(F\) marks the
difference.
Indeed, \(\tr(xy^*)=\sum_{k=1}^N x_k\overline{y_k}=y^*x\),
and
\begin{equation}
\begin{aligned}
\|xy^*\|_F^2
&=\tr((xy^*)^*(xy^*))\\
&=\tr(yx^*xy^*)\\
&=(x^*x)\tr(yy^*)\\
&=\|x\|^2\|y\|^2\,.
\end{aligned}
\end{equation}
\subsection{Trace cyclicity, including rectangular products}\label{trace-cyclicity-including-rectangular-products}

If \(R\in M_{m,n}(\C)\) and \(S\in M_{n,m}(\C)\), then
\begin{equation}\label{eq:2-3}
\tr(RS)=\tr(SR)\,.
\end{equation}
This is standard; see Horn and Johnson~\cite[Sec.~0.3]{HornJohnson1991} or
Bhatia~\cite[Sec.~I.1]{Bhatia1997}.

\subsection{Tensor-product coordinates}\label{tensor-product-coordinates}

Let \(\mathcal H_1=\C^{d_1}\) and \(\mathcal H_2=\C^{d_2}\), with standard orthonormal
bases \((e_a)_{a\in[d_1]}\) and \((f_b)_{b\in[d_2]}\).  We identify
\(\mathcal H_1\otimes\mathcal H_2\cong\C^{d_1d_2}\) using the ordered basis
\begin{equation}
(e_a\otimes f_b)_{a\in[d_1],\,b\in[d_2]}\,.
\end{equation}
A vector \(x\in\mathcal H_1\otimes\mathcal H_2\) is written \(x=\sum_{a=1}^{d_1}\sum_{b=1}^{d_2}x_{ab}\,e_a\otimes f_b\).
A matrix \(C\in M_{d_1d_2}(\C)\) has entries \(C_{(a,b),(c,d)}\)
with \(a,c\in[d_1]\) and \(b,d\in[d_2]\).
For \(P\in M_{d_1}(\C)\) and \(Q\in M_{d_2}(\C)\), their Kronecker product is defined by
\begin{equation}\label{eq:2-4}
(P\otimes Q)_{(a,b),(c,d)}=P_{ac}Q_{bd}\,.
\end{equation}
\subsection{Partial traces}\label{partial-traces}

For \(C\in M_{d_1d_2}(\C)\), define \(\tr_1(C)\in M_{d_2}(\C)\)
and \(\tr_2(C)\in M_{d_1}(\C)\)
by
\begin{align}
(\tr_1(C))_{bd} &:= \sum_{a=1}^{d_1}C_{(a,b),(a,d)}, \label{eq:2-5}\\
(\tr_2(C))_{ac} &:= \sum_{b=1}^{d_2}C_{(a,b),(c,b)}. \label{eq:2-6}
\end{align}
Thus \(\tr_1\) traces out the first tensor factor, while \(\tr_2\) traces out the second tensor factor.

The partial traces are adjoint to tensoring with an identity matrix: for every
\(P\in M_{d_1}(\C)\) and \(Q\in M_{d_2}(\C)\),
\begin{align}
\langle C,P\otimes I_{d_2}\rangle_F &= \langle \tr_2(C),P\rangle_F, \label{eq:2-8}\\
\langle C,I_{d_1}\otimes Q\rangle_F &= \langle \tr_1(C),Q\rangle_F. \label{eq:2-9}
\end{align}
Equations~\eqref{eq:2-8}--\eqref{eq:2-9} can be seen as the coordinate-free definition of
the partial trace: \(\tr_2\) is the unique linear map
\(M_{d_1d_2}(\C)\to M_{d_1}(\C)\) satisfying~\eqref{eq:2-8} for all \(P\), and
\(\tr_1\) the unique one satisfying~\eqref{eq:2-9} for all
\(Q\)~\cite[\S1.1.2]{Watrous2018},~\cite[\S2.4.3]{NielsenChuang2010}.

\subsection{Vectorization}\label{vectorization}

For \(X\in M_{d_1,d_2}(\C)\), define the row-major vectorization,
\begin{equation}\label{eq:2-10}
\operatorname{vec}(X)
:=
\sum_{a=1}^{d_1}\sum_{b=1}^{d_2}X_{ab}\,e_a\otimes f_b\,.
\end{equation}
This provides a linear bijection
\(\operatorname{vec}:M_{d_1,d_2}(\C)\longrightarrow\mathcal H_1\otimes\mathcal H_2\),
and it is an isometry between the two inner products of
Section~\ref{inner-products}:
\begin{equation}\label{eq:vec-isometry}
\langle\operatorname{vec}(X),\operatorname{vec}(Y)\rangle=\langle X,Y\rangle_F\,,
\qquad\text{so that}\qquad
\|\operatorname{vec}(X)\|=\|X\|_F\,.
\end{equation}
If \(x=\operatorname{vec}(X)\) and \(y=\operatorname{vec}(Y)\), then the rank-one
operator \(xy^*\) satisfies
\begin{align}
\tr_2(xy^*) &= XY^*, \label{eq:2-11}\\
\tr_1(xy^*) &= (Y^*X)^T. \label{eq:2-12}
\end{align}
Indeed,
\begin{align}
(\tr_2(xy^*))_{ac}&=\sum_b x_{ab}\overline{y_{cb}}=(XY^*)_{ac}\,,\\
(\tr_1(xy^*))_{bd}&=\sum_a x_{ab}\overline{y_{ad}}=(Y^*X)_{db}=((Y^*X)^T)_{bd}\,.
\end{align}
These are partial-trace forms of the identity \(\operatorname{vec}(ABC)=(A^T\otimes
C)\operatorname{vec}(B)\) [see ~\cite[\S2.3, Eq.~(4)]{HendersonSearle1981} for the column majorized form].

\subsection{Rank and trace}\label{rank-trace}

For every \(C\in M_N(\C)\),
\begin{equation}\label{eq:rank-trace-proof}
|\tr(C)|^2\le\rank(C)\,\|C\|_F^2\,.
\end{equation}
Indeed, for a singular-value decomposition \(C=\sum_{j=1}^s\mu_ju_jv_j^*\)
with \(s=\rank(C)\), Cauchy--Schwarz gives
\(|\tr(C)| = |\sum_{j=1}^s\mu_ju_jv_j^*|
= \sum_{j=1}^s \mu_j |u_jv_j^*|
\le\sum_j\mu_j\le\sqrt s\,\|C\|_F\).
%
%
%
%

\subsection{The paired partial-trace map}\label{sec:phi}

Define the linear map
\begin{equation}
\Phi:M_{d_1d_2}(\C)
\longrightarrow
M_{d_2}(\C)\oplus M_{d_1}(\C)
\end{equation}
by
\begin{equation}\label{eq:5-1}
\Phi(C):=\big(\tr_1(C),\tr_2(C)\big)\,.
\end{equation}
The direct sum is equipped with the inner product
\begin{equation}\label{eq:5-2}
\langle (R_1,R_2),(S_1,S_2)\rangle
:=
\langle R_1,S_1\rangle_F
+
\langle R_2,S_2\rangle_F\,.
\end{equation}
The two summands are orthogonal, so this is again the pairing~\eqref{eq:ip-mat},
read on block-diagonal matrices.  Thus
\begin{equation}\label{eq:5-3}
\|\Phi(C)\|^2
=
\|\tr_1(C)\|_F^2
+
\|\tr_2(C)\|_F^2\,.
\end{equation}
Equations~\eqref{eq:5-1} and~\eqref{eq:5-2} are definitions, and~\eqref{eq:5-3}
is the norm they induce.
The displayed norm is the genuine
norm on the target direct sum: \(\tr_1(C)\) occupies its
\(M_{d_2}(\C)\) slot and \(\tr_2(C)\) its
\(M_{d_1}(\C)\) slot.

\section{Known ingredients}\label{sec:known}

Everything in this section predates the manuscript, with the exception of
Corollary~\ref{cor:crossedpol} which is a straightforward consequence of Lemma~\ref{lem:esidentity}.

\subsection{Constant diagonals}

\begin{restatable}[Parker--Fillmore constant diagonal form \cite{Parker1948,Fillmore1969}]{lemma}{thParkerDiagonal}\label{lem:parker}
Let \(K\in M_r(\C)\), and define \(\tau:=\frac{\tr(K)}{r}\).
Then there exists a unitary matrix \(Q\in M_r(\C)\) such that
\begin{equation}
(Q^*KQ)_{ii}=\tau\quad\text{for every }i\in[r]\,;
\end{equation}
equivalently, there is an orthonormal basis \(q_1,\ldots,q_r\) of \(\C^r\) with
\(q_i^*Kq_i=\tau\) for every \(i\in[r]\).
\end{restatable}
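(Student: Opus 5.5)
We prove the Parker--Fillmore lemma by induction on \(r\), using the convexity of the numerical range (Toeplitz--Hausdorff). First replace \(K\) by \(K-\tau I_r\). This changes no off-diagonal structure and shifts every diagonal entry \(q^*Kq\), for unit \(q\), by exactly \(\tau\). It therefore suffices to treat the case \(\tr(K)=0\) and to find an orthonormal basis \(q_1,\ldots,q_r\) with \(q_i^*Kq_i=0\) for all \(i\). The case \(r=1\) is trivial, since then \(K=0\).

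The key step is to find a single unit vector \(q_1\) with \(q_1^*Kq_1=0\). The numerical range
\[
W(K)=\{q^*Kq:\ \|q\|=1\}
\]
contains every diagonal entry \(K_{ii}=e_i^*Ke_i\). By the Toeplitz--Hausdorff theorem, \(W(K)\) is convex. It therefore contains the average \(\frac1r\sum_iK_{ii}=\frac1r\tr(K)=0\), and we take \(q_1\) to be a unit vector attaining this value.

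If one wants to avoid quoting Toeplitz--Hausdorff, the same conclusion follows from its standard two-dimensional reduction. On the span of two vectors \(x,y\), the compression of \(K\) is a \(2\times2\) matrix, and its numerical range is a (possibly degenerate) filled ellipse, hence convex. A short path argument then shows that \(W(K)\) is convex.

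Next extend \(q_1\) to an orthonormal basis \(q_1,u_2,\ldots,u_r\), and let \(K'\) be the compression of \(K\) to \(\operatorname{span}(u_2,\ldots,u_r)\), an \((r-1)\times(r-1)\) matrix. Its trace is \(\tr(K)-q_1^*Kq_1=0\). By induction there is an orthonormal basis \(q_2,\ldots,q_r\) of that span with \(q_i^*K'q_i=q_i^*Kq_i=0\). Collecting \(q_1,\ldots,q_r\) as the columns of \(Q\) gives a unitary \(Q\) with \((Q^*KQ)_{ii}=0\), which, after undoing the shift, is the claim.

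The only nontrivial ingredient is the convexity of the numerical range. I expect to cite it from Horn and Johnson rather than reprove it, since everything else is routine bookkeeping of traces under compression.
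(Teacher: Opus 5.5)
Your proof is correct, and it goes further than the paper: the paper does not prove this lemma. It quotes it from Parker (Theorem~9) and Fillmore (Corollary~2), adding only that the common diagonal value must be \(\tr(K)/r\) because the trace is unitarily invariant.

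Your argument makes the lemma self-contained apart from one textbook fact. After the shift \(K\mapsto K-\tau I_r\), the only nontrivial step is showing that the centroid \(\frac1r\sum_i K_{ii}=0\) of the diagonal entries lies in \(W(K)\). That is exactly where Toeplitz--Hausdorff is used. The induction is then sound:
\begin{itemize}
\item the compression \(K'\) to \(q_1^\perp\) has trace \(\tr(K)-q_1^*Kq_1=0\);
\item it satisfies \(q^*K'q=q^*Kq\) for \(q\in q_1^\perp\);
\item the vectors \(q_2,\ldots,q_r\) it produces are automatically orthogonal to \(q_1\).
\end{itemize}
The trade-off is brevity against transparency. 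The citation keeps the note short. Your route shows that convexity of the numerical range is the real content behind the constant-diagonal form. It also delivers exactly what Lemma~\ref{lem:balanced} needs, namely a unitary \(Q\) with \((Q^*KQ)_{ii}=\tr(K)/r\).

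One small correction to your optional aside: no path argument is needed to get Toeplitz--Hausdorff from the \(2\times2\) case. For unit vectors \(x,y\), take the compression of \(K\) to \(\operatorname{span}(x,y)\). Its numerical range contains both \(x^*Kx\) and \(y^*Ky\), is convex by the elliptical range theorem, and is contained in \(W(K)\). So the whole segment between the two points already lies in \(W(K)\).
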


The statement used here is~\cite[Theorem~9]{Parker1948}: for every square complex
matrix there is a unitary \(Q\) for which \(Q^*KQ\) has all diagonal entries equal;
since the trace is unitarily invariant, that common value is \(\tau=\tr(K)/r\).  The
same statement is~\cite[Corollary~2]{Fillmore1969}, with the constant likewise fixed
by the trace condition of~\cite[Theorem~2]{Fillmore1969}.

\subsection{Singular-value decomposition}

{
\begin{restatable}[Singular-value decomposition \cite{HornJohnson1991,Bhatia1997}]{lemma}{thCompactSVD}\label{lem:svd}
Let \(C\in M_N(\C)\) have rank \(s\). The case \(s=0\) is interpreted using empty families and an empty sum. Then there exist positive numbers \(\mu_1,\ldots,\mu_s>0\)
and orthonormal families
\begin{equation}
u_1,\ldots,u_s\in\C^N,
\qquad
v_1,\ldots,v_s\in\C^N
\end{equation}
such that \(C=\sum_{i=1}^s\mu_i u_i v_i^*\).
\end{restatable}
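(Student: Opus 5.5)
The plan is to reduce the statement to the Hermitian, positive semidefinite case via the polar decomposition, and then read off the singular-value decomposition from the spectral theorem. Write \(C\in M_N(\C)\). The matrix \(C^*C\) is Hermitian and positive semidefinite, so the spectral theorem gives an orthonormal basis \(v_1,\ldots,v_N\) of \(\C^N\) with \(C^*Cv_i=\lambda_i v_i\) and \(\lambda_i\ge0\). We order the eigenvalues so that \(\lambda_1,\ldots,\lambda_t>0\) and \(\lambda_{t+1}=\cdots=\lambda_N=0\).

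The first step identifies \(t\) with \(s=\rank(C)\). Since \(\|Cv\|^2=v^*C^*Cv\), we have \(\ker C=\ker C^*C\). That kernel is \(\operatorname{span}(v_{t+1},\ldots,v_N)\), so \(t=N-\dim\ker C=s\) by rank--nullity.

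The second step builds the left vectors. For \(i\le s\), set \(\mu_i:=\sqrt{\lambda_i}>0\) and \(u_i:=Cv_i/\mu_i\). Orthonormality follows from
\[
u_i^*u_j=\frac{v_i^*C^*Cv_j}{\mu_i\mu_j}=\frac{\lambda_j\delta_{ij}}{\mu_i\mu_j}=\delta_{ij}.
\]

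The third step checks the decomposition \(C=\sum_{i=1}^s\mu_iu_iv_i^*\). It suffices to compare both sides on the basis \((v_j)\). For \(j\le s\), the right-hand side gives \(\mu_ju_j=Cv_j\) by orthonormality of the \(v\)'s. For \(j>s\), the right-hand side gives \(0\), and also \(Cv_j=0\) because \(v_j\in\ker C\). Two linear maps that agree on a basis are equal.

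The case \(s=0\) is immediate: then \(C=0\), and the empty sum gives \(0\). There is no real obstacle in this proof. The only point needing care is the identification \(\ker C=\ker C^*C\), which is what ensures that the number of positive \(\mu_i\) is exactly the rank \(s\).
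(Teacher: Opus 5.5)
Your proof is correct. It is the standard derivation from the spectral theorem applied to \(C^*C\), using \(\ker C=\ker C^*C\) to match the number of positive \(\mu_i\) with \(\rank(C)\); this is exactly the route the paper relies on implicitly, since it gives no proof, cites Horn--Johnson and Bhatia, and lists the spectral theorem as the only dependency. One cosmetic point: your opening sentence announces a reduction via the polar decomposition, but the argument never uses it, so that framing should simply be dropped.
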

}

\subsection{Lagrange variance identity}

\begin{restatable}[Lagrange variance identity {\cite[Theorem 7]{HardyLittlewoodPolya1934}}]{lemma}{thLagrange}\label{lem:lagrange}
For real numbers \(\delta_1,\ldots,\delta_r\),
\begin{equation}\label{eq:lagrange}
r\sum_{i=1}^r \delta_i^2-\Bigl(\sum_{i=1}^r \delta_i\Bigr)^2
=\sum_{1\le i<j\le r}(\delta_i-\delta_j)^2\ \ge\ 0\,.
\end{equation}
Equality in~\eqref{eq:lagrange} is reached for
\(\delta_1=\cdots=\delta_r\).
\end{restatable}

To see this, note that both sides of~\eqref{eq:lagrange} equal
\((r-1)\sum_i\delta_i^2-2\sum_{i<j}\delta_i\delta_j\).  On the left,
\(\bigl(\sum_i\delta_i\bigr)^2=\sum_i\delta_i^2+2\sum_{i<j}\delta_i\delta_j\); on the
right, expanding each square gives
\(\sum_{i<j}(\delta_i-\delta_j)^2=(r-1)\sum_i\delta_i^2-2\sum_{i<j}\delta_i\delta_j\),
since each \(\delta_i^2\) occurs in exactly \(r-1\) of the \(\binom r2\) terms.

\subsection{Rank-one partial-trace identities}

\begin{restatable}[Rank-one partial-trace inequality \cite{CostaRico2025,CostaRicoWolf2025}]{lemma}{thRankOnePT}\label{lem:rankonept}
For every \(x,y\in\C^{d_1}\otimes\C^{d_2}\),
one has
\begin{equation}
\|\Phi(xy^*)\|^2
\le
\|x\|^2\|y\|^2+|y^*x|^2\,.
\end{equation}
Equivalently,
\begin{equation}
\|\tr_1(xy^*)\|_F^2
+
\|\tr_2(xy^*)\|_F^2
\le
\|xy^*\|_F^2+|\tr(xy^*)|^2\,.
\end{equation}
\end{restatable}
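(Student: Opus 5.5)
Write \(x=\operatorname{vec}(X)\) and \(y=\operatorname{vec}(Y)\) with \(X,Y\in M_{d_1,d_2}(\C)\). By~\eqref{eq:2-11} and~\eqref{eq:2-12}, and since transposition preserves the Frobenius norm, the left-hand side equals
\[
\|Y^*X\|_F^2+\|XY^*\|_F^2 .
\]
By~\eqref{eq:vec-isometry} and~\eqref{eq:rank1-trace}, the right-hand side equals
\[
\|X\|_F^2\|Y\|_F^2+|\tr(Y^*X)|^2 .
\]
So the plan is to prove the purely matrix inequality
\begin{equation*}
\|Y^*X\|_F^2+\|XY^*\|_F^2\le \|X\|_F^2\|Y\|_F^2+|\langle Y,X\rangle_F|^2
\end{equation*}
for rectangular \(X,Y\). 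The two formulations in the statement then agree by~\eqref{eq:rank1-norm}.

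To prove it, I expand both sides in the matrix units \(E_{ab}\) of \(M_{d_1,d_2}(\C)\). Write \(\|Y^*X\|_F^2=\tr(X^*YY^*X)\) and \(\|XY^*\|_F^2=\tr(YX^*XY^*)\). Both sides are then Hermitian forms in the pair \((X,Y)\), quadratic in each variable separately. Fix \(Y\) and view the difference RHS\(-\)LHS as a Hermitian form \(h_Y\) in \(X\):
\[
h_Y(X)=\|X\|_F^2\|Y\|_F^2+|\langle Y,X\rangle|^2-\langle X,(YY^*\otimes I)X\rangle-\langle X,(I\otimes Y^*Y)X\rangle,
\]
with the operators acting by left multiplication by \(YY^*\) and right multiplication by \(Y^*Y\).

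Next, reduce by singular value decomposition. Conjugating \(X\mapsto UXV^*\), \(Y\mapsto UYV^*\) with unitaries \(U,V\) leaves every term invariant. Using Lemma~\ref{lem:svd} I may therefore assume \(Y=\sum_k s_kE_{kk}\) is diagonal, with \(\sum_k s_k^2=1\). The operator \(X\mapsto YY^*X+XY^*Y\) is then diagonal in the matrix units, with eigenvalue \(s_a^2+s_b^2\) on \(E_{ab}\). Meanwhile \(|\langle Y,X\rangle|^2\) is a rank-one positive form supported on the diagonal entries \(X_{kk}\).

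The remaining check is:
\begin{itemize}
\item Off-diagonal entries \(a\ne b\): here \(s_a^2+s_b^2\le 1\), so these components contribute nonnegatively.
\item Diagonal entries: we need
\[
\sum_k(1-2s_k^2)|X_{kk}|^2+\Bigl|\sum_k s_kX_{kk}\Bigr|^2\ge0 .
\]
At most one index can have \(s_k^2>\tfrac12\). If none does, every term is nonnegative and we are done. If exactly one, say \(k=1\), then the form is
\[
(1-2s_1^2)|z_1|^2+\sum_{k\ge2}(1-2s_k^2)|z_k|^2+\Bigl|s_1z_1+\sum_{k\ge2}s_kz_k\Bigr|^2 .
\]
\end{itemize}

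The main obstacle is showing this last form is positive semidefinite. My plan is to complete the square in \(z_1\), using \(1-s_1^2=\sum_{k\ge2}s_k^2\), and then apply Cauchy--Schwarz:
\[
\Bigl|\sum_{k\ge2}s_kz_k\Bigr|^2\le\sum_{k\ge2}s_k^2\cdot\sum_{k\ge2}|z_k|^2 .
\]
This should leave the required nonnegativity with coefficients \(1-2s_k^2\ge 1-2(1-s_1^2)\ge0\) for \(k\ge2\), since \(s_k^2\le 1-s_1^2<\tfrac12\).

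If this bookkeeping fails, the fallback is to compute the minimal eigenvalue of the \((1+m)\)-dimensional form directly via the Schur complement, which is explicit because its restriction to \(k\ge2\) is diagonal plus rank one.
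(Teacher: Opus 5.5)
Your argument is correct, but it takes a different route from the paper. The paper works in the doubled space. With $F_A,F_B$ the swaps of the two copies of each tensor factor, the expectation values of $I$, $F_A$, $F_B$, $F_AF_B$ in $x\otimes y$ are $\|x\|^2\|y\|^2$, the two squared marginal norms, and $|y^*x|^2$. Hence the defect is exactly $\langle x\otimes y,[(I-F_A)\otimes(I-F_B)]\,x\otimes y\rangle=4\|(P_A^-\otimes P_B^-)(x\otimes y)\|^2$. In your notation this is an explicit Gram representation of $h_Y$. It gives positivity in one line, keeps the symmetry between $x$ and $y$, and shows the equality cases: the defect vanishes iff $x\otimes y$ has no antisymmetric--antisymmetric component.

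Your approach avoids the doubled Hilbert space and is entirely elementary, at the cost of a case split and some bookkeeping. You vectorize, use invariance under $(X,Y)\mapsto(UXV^*,UYV^*)$ to make $Y$ diagonal, dispose of the off-diagonal entries via $s_a^2+s_b^2\le1$, and complete the square on the diagonal block. That bookkeeping does close. For $s_1^2\in(\tfrac12,1)$ and $w=\sum_{k\ge2}s_kz_k$, minimizing over $z_1$ leaves $\frac{1-2s_1^2}{1-s_1^2}|w|^2+\sum_{k\ge2}(1-2s_k^2)|z_k|^2$. Cauchy--Schwarz bounds this below by $\sum_{k\ge2}2(1-s_1^2-s_k^2)|z_k|^2\ge0$. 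Your inequality $1-2s_k^2\ge1-2(1-s_1^2)$ is precisely what is needed here; the further $\ge0$ in your chain plays no role.

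A few points to tidy up:
\begin{itemize}
\item Handle $s_1^2=1$ separately. Completing the square divides by $1-s_1^2$, but in this case $w=0$ and the form is just $\sum_{k\ge2}|z_k|^2$.
\item Use the rectangular SVD for $Y\in M_{d_1,d_2}(\C)$ rather than the square Lemma~\ref{lem:svd}.
\item With row-major vectorization, right multiplication by $Y^*Y$ is $I\otimes(Y^*Y)^T$, not $I\otimes Y^*Y$. This is harmless once $Y$ is diagonal.
\end{itemize}
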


The inequality is \cite[Proposition~1]{CostaRico2025}; it is also the \(\gamma=2\)
case of \cite[Proposition~11, Eq.~(28)]{CostaRicoWolf2025}.

Lemma~\ref{lem:rankonept} can be seen as follows:
Let \(F_A\) and \(F_B\) denote the swap operators acting on the pairs of \(\C^{d_1}\) and \(\C^{d_2}\) factors respectively.
Then,
\begin{equation}\label{eq:rankone-certificate}
\|x\|^2\|y\|^2+|y^*x|^2-\|\Phi(xy^*)\|^2
=\bigl\langle x\otimes y,\bigl[(I-F_A)\otimes(I-F_B)\bigr]x\otimes y\bigr\rangle
=4\bigl\|(P_A^-\otimes P_B^-)(x\otimes y)\bigr\|^2\ \ge\ 0\,,
\end{equation}
where \(P_A^-=\tfrac12(I-F_A)\) and \(P_B^-=\tfrac12(I-F_B)\) are the antisymmetric
projections.

\begin{restatable}[Theorem~A for positive matrices \cite{Audenaert2007}]{lemma}{thPSDbound}\label{lem:psd}
Let \(C\in M_{d_1d_2}(\C)\) be positive semidefinite.  Then, with no hypothesis on
its rank,
\begin{equation}\label{eq:psd-rankfree}
\|\tr_1(C)\|_F^2+\|\tr_2(C)\|_F^2\le\|C\|_F^2+(\tr(C))^2\,.
\end{equation}
If moreover \(\rank(C)\le r\), then
\begin{equation}\label{eq:psd-thmA}
\|\tr_1(C)\|_F^2+\|\tr_2(C)\|_F^2\le r\|C\|_F^2+\frac1r(\tr(C))^2\,,
\end{equation}
which is the upper bound of Theorem~A.
\end{restatable}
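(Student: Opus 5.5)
The plan is to lift the rank-one certificate~\eqref{eq:rankone-certificate} from a product vector \(x\otimes y\) to the positive operator \(C\otimes C\) on two copies. We work on \((\C^{d_1}\otimes\C^{d_2})^{\otimes 2}\), with \(F_A\) swapping the two \(\C^{d_1}\) factors, \(F_B\) swapping the two \(\C^{d_2}\) factors, and \(F_AF_B=F\) the full swap of the two copies. Expanding \(4P_A^-\otimes P_B^-=(I-F_A)(I-F_B)\) gives the operator inequality
\[
F_A+F_B\ \preceq\ I+F ,
\]
since \(P_A^-\otimes P_B^-\succeq0\) and \(F_A\), \(F_B\) commute. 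This is the same positivity that yields~\eqref{eq:rankone-certificate}, now stated at the operator level.

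Next, pair this inequality with the positive operator \(C\otimes C\). Here \(C\succeq0\) implies \(C\otimes C\succeq0\), and the trace of a product of two positive semidefinite operators is nonnegative. This gives
\[
\tr\bigl((F_A+F_B)(C\otimes C)\bigr)\le\tr\bigl((I+F)(C\otimes C)\bigr).
\]
We then evaluate the four traces by the standard swap trick.
\begin{itemize}
\item \(\tr(C\otimes C)=(\tr C)^2\).
\item \(\tr(F(C\otimes C))=\tr(C^2)=\|C\|_F^2\), using that \(C\) is Hermitian.
\item \(\tr(F_A(C\otimes C))=\tr\bigl(\tr_2(C)^2\bigr)=\|\tr_2(C)\|_F^2\). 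To see this, trace out the two \(B\) factors first, which leaves \(F_A(\tr_2C\otimes\tr_2C)\), and use that \(\tr_2C\) is Hermitian.
\item \(\tr(F_B(C\otimes C))=\|\tr_1(C)\|_F^2\), by the same argument.
\end{itemize}
Substituting these gives~\eqref{eq:psd-rankfree} directly. The one point needing care is the index bookkeeping in the identity \(\tr(F_A(C\otimes C))=\tr(\tr_2(C)^2)\). I would verify it in coordinates using~\eqref{eq:2-6}, or alternatively by expanding \(C=\sum_i\lambda_ix_ix_i^*\) and applying~\eqref{eq:2-11}--\eqref{eq:2-12}.

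For~\eqref{eq:psd-thmA}, it suffices to show that
\[
\|C\|_F^2+(\tr C)^2\ \le\ r\|C\|_F^2+\tfrac1r(\tr C)^2 .
\]
This inequality is equivalent to \((1-\tfrac1r)(\tr C)^2\le (r-1)\|C\|_F^2\). For \(r=1\) it is trivial. For \(r\ge2\), dividing by \(r-1\) reduces it to \((\tr C)^2\le r\|C\|_F^2\). That holds by~\eqref{eq:rank-trace-proof}, because \(\rank C\le r\).

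I do not expect a genuine obstacle. The only substantive inputs are the positivity of \(P_A^-\otimes P_B^-\) and of \(C\otimes C\). Positive semidefiniteness is essential at exactly this step: for a general \(C\), the operator \(C\otimes C\) is not positive, which is why this argument does not extend to Theorem~A.
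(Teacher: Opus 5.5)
Your proof is correct. For \eqref{eq:psd-rankfree} it takes a different route from the paper, which gives no argument of its own there. The paper simply identifies \eqref{eq:psd-rankfree}, after rescaling a density matrix to an unnormalized $C\succeq0$, as the $q=2$ case of Audenaert's subadditivity of Tsallis $q$-entropies~\cite{Audenaert2007}.

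You instead prove it directly. You promote the rank-one certificate \eqref{eq:rankone-certificate} to the operator inequality $I+F-F_A-F_B=4P_A^-\otimes P_B^-\succeq0$ and pair it with $C\otimes C\succeq0$. Your four swap-trick evaluations are right, and the slack is explicitly $4\tr[(P_A^-\otimes P_B^-)(C\otimes C)]$.

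Your passage to \eqref{eq:psd-thmA} via \eqref{eq:rank-trace-proof}, with $r=1$ split off, is exactly the step the paper indicates; its dependency table lists \eqref{eq:rank-trace-proof} for this lemma.

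The citation buys generality, since Audenaert's result covers all $q>1$ and not just the quadratic case where a swap trick is available. Your route buys a self-contained proof from tools already in the paper. Writing $C=\sum_i\lambda_ix_ix_i^*$, your inequality is the $\lambda_i\lambda_j$-weighted sum of the certificates \eqref{eq:rankone-certificate} for the pairs $(x_i,x_j)$. Corollary~\ref{cor:crossedpol} then identifies $\sum_{i,j}\lambda_i\lambda_j\|\Phi(x_ix_j^*)\|^2$ with $\|\Phi(C)\|^2$. This is the mechanism of Steps~2--3 of the proof of Theorem~A for a decomposition with $x_i=y_i$, where the cross terms are already squared norms and no Cauchy--Schwarz step is needed.

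Your argument also isolates exactly where positivity enters, and that dependence is genuine. For the non-normal rank-two matrix $C=I_2\otimes E_{12}$ one has $\|\tr_1(C)\|_F^2+\|\tr_2(C)\|_F^2=4>2=\|C\|_F^2+|\tr(C)|^2$, while Theorem~A at $r=2$ holds there with equality.
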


Inequality~\eqref{eq:psd-rankfree} is the \(q=2\) case of positive-matrix entropy
subadditivity~\cite{Audenaert2007}.
We note that Lemma~\ref{lem:psd} is not used in the proof of
Theorem~A.

\subsection{A four-state identity}\label{sec:es-identity}

The identity of Eltschka and Siewert~\cite{EltschkaSiewert2020}
below converts a trace inner product of the marginals of two rank-one operators into a complementary expression with terms exchanged.
What Theorem~A eventually uses is its direct-sum consequence,
Corollary~\ref{cor:crossedpol}.

\begin{lemma}[Four-state marginal identity {\cite[Eq.~(13)]{EltschkaSiewert2020}}]\label{lem:esidentity}
In the original form of Eltschka and Siewert, for all
\(\psi,\chi,\phi,\zeta\in\C^{d_1}\otimes\C^{d_2}\),
\begin{equation}\label{eq:es-bilinear}
\tr\!\bigl[\tr_1(\psi\chi^*)\,\tr_1(\phi\zeta^*)\bigr]
=\tr\!\bigl[\tr_2(\psi\zeta^*)\,\tr_2(\phi\chi^*)\bigr]\,,
\end{equation}
a bilinear trace identity.
\end{lemma}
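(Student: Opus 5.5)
The plan is to reduce both sides of~\eqref{eq:es-bilinear} to traces of products of ordinary $d_1\times d_2$ matrices using the vectorization of Section~\ref{vectorization}, and then compare them by trace cyclicity~\eqref{eq:2-3}. Since $\operatorname{vec}$ is a bijection, I will write $\psi=\operatorname{vec}(\Psi)$, $\chi=\operatorname{vec}(X)$, $\phi=\operatorname{vec}(\Phi')$ and $\zeta=\operatorname{vec}(Z)$ with $\Psi,X,\Phi',Z\in M_{d_1,d_2}(\C)$. No generality is lost, and no linearity argument is needed beyond this.

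First, the left-hand side. By~\eqref{eq:2-12}, $\tr_1(\psi\chi^*)=(X^*\Psi)^T$ and $\tr_1(\phi\zeta^*)=(Z^*\Phi')^T$, both in $M_{d_2}(\C)$. Using $A^TB^T=(BA)^T$ and invariance of the trace under transposition,
\[
\tr\!\bigl[(X^*\Psi)^T(Z^*\Phi')^T\bigr]=\tr\!\bigl[Z^*\Phi'X^*\Psi\bigr].
\]
Second, the right-hand side. By~\eqref{eq:2-11}, $\tr_2(\psi\zeta^*)=\Psi Z^*$ and $\tr_2(\phi\chi^*)=\Phi'X^*$, both in $M_{d_1}(\C)$, so the right-hand side equals $\tr[\Psi Z^*\Phi'X^*]$.

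Third, I compare the two expressions. Applying~\eqref{eq:2-3} with $R=\Psi\in M_{d_1,d_2}(\C)$ and $S=Z^*\Phi'X^*\in M_{d_2,d_1}(\C)$ gives $\tr[\Psi Z^*\Phi'X^*]=\tr[Z^*\Phi'X^*\Psi]$. This matches the left-hand side, which proves the identity.

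There is no real obstacle here; the argument is bookkeeping. The only points needing care are the order of factors and the placement of the transpose in~\eqref{eq:2-12}, and checking that the rectangular products have compatible sizes, so that cyclicity applies in the rectangular form~\eqref{eq:2-3}. As an independent check, one could also verify the identity entrywise: both sides equal $\sum_{a,b,c,d}\psi_{ab}\overline{\chi_{ad}}\,\phi_{cd}\overline{\zeta_{cb}}$, which follows directly from~\eqref{eq:2-5} and~\eqref{eq:2-6}.
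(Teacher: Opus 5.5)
Your proof is correct and follows essentially the same route as the paper. Both vectorize the four vectors, evaluate the marginals via \eqref{eq:2-11}--\eqref{eq:2-12}, and conclude by rectangular trace cyclicity \eqref{eq:2-3}. The only difference is cosmetic: the paper proves the Frobenius-product version \eqref{eq:es-component-1} and transfers it to \eqref{eq:es-bilinear} by the substitution $(\psi,\chi,\phi,\zeta)=(v,u,s,t)$, whereas you verify the bilinear form directly using $\tr(A^T)=\tr(A)$, which is if anything slightly more direct.
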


Equivalently, through the Frobenius inner product, for all
\(u,v,s,t\in\C^{d_1}\otimes\C^{d_2}\),
\begin{align}
\bigl\langle \tr_1(uv^*),\tr_1(st^*)\bigr\rangle_F
 &= \bigl\langle \tr_2(tv^*),\tr_2(su^*)\bigr\rangle_F\,,
 \label{eq:es-component-1}\\
\bigl\langle \tr_2(uv^*),\tr_2(st^*)\bigr\rangle_F
 &= \bigl\langle \tr_1(tv^*),\tr_1(su^*)\bigr\rangle_F\,.
 \label{eq:es-component-2}
\end{align}
The two forms are equivalent by replacing \((\psi,\chi,\phi,\zeta)=(v,u,s,t)\)
in~\eqref{eq:es-bilinear} and using that the Frobenius product conjugates its first
argument, e.g. \(\langle\tr_1(uv^*),\tr_1(st^*)\rangle_F=\tr[\tr_1(vu^*)\,\tr_1(st^*)]\).

For the benefit of the reader we prove Lemma~\ref{lem:esidentity} directly from the
vectorization identities~\eqref{eq:2-11}--\eqref{eq:2-12}.

\begin{proof}[Proof of Lemma~\ref{lem:esidentity}]
It suffices to prove~\eqref{eq:es-component-1};
equation~\eqref{eq:es-component-2} is the same identity with the two tensor factors
interchanged, and both are equivalent to the bilinear form~\eqref{eq:es-bilinear}.

By the vectorization bijection~\eqref{eq:2-10}, write
\(u=\operatorname{vec}(U)\), \(v=\operatorname{vec}(V)\),
\(s=\operatorname{vec}(S)\), \(t=\operatorname{vec}(T)\)
with \(U,V,S,T\in M_{d_1,d_2}(\C)\).

Evaluate the left-hand side of~\eqref{eq:es-component-1}:
by~\eqref{eq:2-12}, \(\tr_1(uv^*)=(V^*U)^T\) and \(\tr_1(st^*)=(T^*S)^T\), and since
transposition preserves the Frobenius inner product,
\begin{equation*}
\langle\tr_1(uv^*),\tr_1(st^*)\rangle_F=\langle V^*U,T^*S\rangle_F=\tr(U^*VT^*S)\,.
\end{equation*}
Now evaluate the right-hand side of~\eqref{eq:es-component-1}:
by~\eqref{eq:2-11}, \(\tr_2(tv^*)=TV^*\) and \(\tr_2(su^*)=SU^*\), so
\begin{equation*}
\langle\tr_2(tv^*),\tr_2(su^*)\rangle_F=\langle TV^*,SU^*\rangle_F=\tr(VT^*SU^*)\,.
\end{equation*}
By trace cyclicity~\eqref{eq:2-3}, \(\tr(U^*VT^*S)=\tr(VT^*SU^*)\), so the two sides
agree, which is~\eqref{eq:es-component-1}.
This ends the proof.
\end{proof}

\begin{corollary}[Crossed polarization]\label{cor:crossedpol}
For \(u,v,s,t\in\C^{d_1}\otimes\C^{d_2}\),
\begin{equation}\label{eq:crossedpol}
\bigl\langle\Phi(uv^*),\Phi(st^*)\bigr\rangle
=\bigl\langle\Phi(tv^*),\Phi(su^*)\bigr\rangle\,.
\end{equation}
\end{corollary}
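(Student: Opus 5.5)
The identity follows directly from Lemma~\ref{lem:esidentity}, applied to the two slots of \(\Phi\) separately. By the definition~\eqref{eq:5-2} of the inner product on the direct sum,
\[
\bigl\langle\Phi(uv^*),\Phi(st^*)\bigr\rangle
=\bigl\langle\tr_1(uv^*),\tr_1(st^*)\bigr\rangle_F+\bigl\langle\tr_2(uv^*),\tr_2(st^*)\bigr\rangle_F,
\]
and likewise
\[
\bigl\langle\Phi(tv^*),\Phi(su^*)\bigr\rangle
=\bigl\langle\tr_1(tv^*),\tr_1(su^*)\bigr\rangle_F+\bigl\langle\tr_2(tv^*),\tr_2(su^*)\bigr\rangle_F.
\]

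Next, rewrite each summand of the first expansion using the component identities. By~\eqref{eq:es-component-1}, the \(\tr_1\)-summand equals \(\langle\tr_2(tv^*),\tr_2(su^*)\rangle_F\). By~\eqref{eq:es-component-2}, the \(\tr_2\)-summand equals \(\langle\tr_1(tv^*),\tr_1(su^*)\rangle_F\).

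So the four-state identity swaps the two partial-trace slots and at the same time exchanges \((u,t)\). Adding the two rewritten summands gives exactly the two summands of the second expansion, in the other order, which is~\eqref{eq:crossedpol}.

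There is no genuine obstacle here. The only point to watch is the bookkeeping of which vector carries the conjugate in each rank-one operator, since the Frobenius product is conjugate-linear in its first argument. This is handled by using the Frobenius forms~\eqref{eq:es-component-1}--\eqref{eq:es-component-2} directly rather than the bilinear form~\eqref{eq:es-bilinear}.
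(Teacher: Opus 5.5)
Your proposal is correct and is the same argument as the paper's: you expand both sides with the direct-sum inner product~\eqref{eq:5-2}, rewrite the two summands on the left using \eqref{eq:es-component-1} and \eqref{eq:es-component-2}, and add them. Your observation that the identity amounts to swapping the two partial-trace slots together with the exchange \(u\leftrightarrow t\) is an accurate reading of what the two component identities say.
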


\begin{proof}
By the definition~\eqref{eq:5-2} of the direct-sum inner product, the left-hand
side of~\eqref{eq:crossedpol} is the sum of the left-hand sides
of~\eqref{eq:es-component-1} and~\eqref{eq:es-component-2}, and the right-hand side
is the sum of their right-hand sides.  Adding the two therefore gives the claim.
This ends the proof.
\end{proof}

\begin{remark}
 Setting \((u,v,s,t)=(x_i,y_i,x_j,y_j)\), Corollary~\ref{cor:crossedpol} reads
\begin{equation}\label{eq:crossedpol-xy}
\bigl\langle\Phi(x_i y_i^*),\Phi(x_j y_j^*)\bigr\rangle
=\bigl\langle\Phi(y_j y_i^*),\Phi(x_j x_i^*)\bigr\rangle\,,
\end{equation}
the form used in the proof of Theorem~A.
\end{remark}

\section{The balanced rank decomposition}\label{singular-value-decomposition-and-a-balanced-rank-decomposition}

We now state a main ingredient of the proof, a decomposition of $C$
such that the diagonal is constant and the Gram matrices corresponding to the left and right factors coincide.

\begin{lemma}[Balanced rank-\(r\) decomposition]\label{lem:balanced}
Let \(C\in M_N(\C)\)
satisfy \(\rank(C)\le r\le N\).
Then there exist vectors \(x_1,\ldots,x_r\in\C^N\) and
\(y_1,\ldots,y_r\in\C^N\), and a Hermitian positive semidefinite matrix \(G=(g_{ij})_{i,j=1}^r\in M_r(\C)\)
such that all of the following hold:
\begin{align}
C &= \sum_{i=1}^r x_i y_i^*\,, \label{eq:4-2}\\
x_i^*x_j &= y_i^*y_j = g_{ij}
  \qquad\text{for every }i,j\in[r]\,, \label{eq:4-3}\\
y_i^*x_i &= \frac{\tr(C)}{r}
  \qquad\text{for every }i\in[r]\,. \label{eq:4-4}
\end{align}
\end{lemma}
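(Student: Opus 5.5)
We build the decomposition from the singular-value decomposition of Lemma~\ref{lem:svd}. Write \(C=\sum_{i=1}^s\mu_iu_iv_i^*\) with \(s=\rank(C)\le r\) and \(\mu_i>0\). Since \(r\le N\), we may pad both orthonormal families: extend \(u_1,\ldots,u_s\) to an orthonormal family \(u_1,\ldots,u_r\), do the same for the \(v\)'s, and set \(\mu_{s+1}=\cdots=\mu_r=0\). This gives \(C=\sum_{i=1}^r\mu_iu_iv_i^*\). Next we split each singular value symmetrically and define
\[
\tilde x_i:=\sqrt{\mu_i}\,u_i, \qquad \tilde y_i:=\sqrt{\mu_i}\,v_i .
\]
Then \(C=\sum_i\tilde x_i\tilde y_i^*\), and both Gram matrices are the same diagonal matrix:
\[
\tilde x_i^*\tilde x_j=\tilde y_i^*\tilde y_j=\mu_i\delta_{ij}.
\]
So \eqref{eq:4-2} and \eqref{eq:4-3} already hold, with \(G=D:=\operatorname{diag}(\mu_1,\ldots,\mu_r)\). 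The remaining task is to enforce the constant diagonal \eqref{eq:4-4} without destroying these two properties.

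To do this we mix the families by a single unitary. Let \(Q\in M_r(\C)\) be unitary and set
\[
x_j:=\sum_i Q_{ij}\tilde x_i, \qquad y_j:=\sum_i \overline{Q_{ij}}\,\tilde y_i ?
\]
We need to choose the conjugation pattern so that \(\sum_j x_jy_j^*=C\) and the Gram matrices stay equal. In matrix form, write \(X=[\tilde x_1\cdots\tilde x_r]\) and \(Y=[\tilde y_1\cdots\tilde y_r]\) (both \(N\times r\)), so \(C=XY^*\). Replace them by \(XQ\) and \(YQ\). Then \(XQ(YQ)^*=XQQ^*Y^*=XY^*=C\), which gives \eqref{eq:4-2}. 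The Gram matrices become
\[
(XQ)^*(XQ)=Q^*DQ=(YQ)^*(YQ),
\]
so they still coincide and \(G:=Q^*DQ\) is positive semidefinite, which gives \eqref{eq:4-3}. Finally, the pairings \(y_i^*x_i\) are the diagonal entries of \((YQ)^*(XQ)=Q^*KQ\), where \(K:=Y^*X\in M_r(\C)\).

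By cyclicity \eqref{eq:2-3}, \(\tr(K)=\tr(XY^*)=\tr(C)\). Lemma~\ref{lem:parker} then supplies a unitary \(Q\) for which every diagonal entry of \(Q^*KQ\) equals \(\tr(C)/r\), which is \eqref{eq:4-4}. The point that makes the argument work is that the same right unitary acts on both \(X\) and \(Y\). This single choice simultaneously preserves \(C\), keeps the two Gram matrices equal (both become \(Q^*DQ\)), and conjugates \(K\), so the Parker--Fillmore freedom is available without disturbing the other two conditions.

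The only mild obstacle is the padding step, together with the degenerate case \(s=0\). Padding needs \(r\le N\) so that the orthonormal families can be extended. For \(s=0\) we have \(D=0\) and \(K=0\), all vectors vanish, and every condition holds trivially.
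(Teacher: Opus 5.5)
Your proposal is correct and follows the paper's proof essentially step for step. You use the padded SVD, the symmetric split \(X=U\Sigma^{1/2}\), \(Y=V\Sigma^{1/2}\), and a single Parker--Fillmore unitary \(Q\) applied on the right of both factors, which at once preserves \(C\), keeps the common Gram matrix \(Q^*\Sigma Q\), and conjugates \(K=Y^*X\). The only blemish is the tentative column formula with \(\overline{Q_{ij}}\) in \(y_j\): it is wrong, since the columns of \(YQ\) are \(\sum_i Q_{ij}\tilde y_i\) without conjugation. Your matrix-form argument supersedes it, so simply delete that line.
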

\begin{proof}

Let \(s:=\rank(C)\le r\).
Take a singular-value decomposition from Lemma~\ref{lem:svd}: \(C=\sum_{i=1}^s\mu_i u_i v_i^*\).
Extend \(u_1,\ldots,u_s\) to an orthonormal family \(u_1,\ldots,u_r\), and extend \(v_1,\ldots,v_s\) to an orthonormal family \(v_1,\ldots,v_r\). Define \(\mu_{s+1}=\cdots=\mu_r=0\).
Let
\begin{equation}
U=(u_1\ \cdots\ u_r),
\qquad
V=(v_1\ \cdots\ v_r)
\end{equation}
be the corresponding \(N\times r\) isometries, and let \(\Sigma:=\operatorname{diag}(\mu_1,\ldots,\mu_r)\).
Then \(C=U\Sigma V^*\).
Define
\begin{equation}\label{eq:4-5}
X:=U\Sigma^{1/2},
\qquad
Y:=V\Sigma^{1/2}\,.
\end{equation}
Then \(C=XY^*\) and \(X^*X=\Sigma=Y^*Y\).

Set \(K:=Y^*X\in M_r(\C)\).
By rectangular trace cyclicity \eqref{eq:2-3} it holds that
\begin{equation}\label{eq:4-8}
\tr(K)
=
\tr(Y^*X)
=
\tr(XY^*)
=
\tr(C)\,.
\end{equation}
By Lemma~\ref{lem:parker}, there exists a unitary \(Q\in M_r(\C)\) such that every diagonal entry of \(Q^*KQ\) is
\begin{equation}\label{eq:4-9}
\tau:=\frac{\tr(C)}{r}\,.
\end{equation}

Now let \(\widetilde X:=XQ\) and \(\widetilde Y:=YQ\) and note that
\begin{equation}
 C = \widetilde X\widetilde Y^* = XQQ^*Y^* =XY^*\,, \label{eq:4-10}
\end{equation}
using \(QQ^*=I\).  Also define
\begin{equation}
 G := \widetilde X^*\widetilde X=Q^*\Sigma Q =\widetilde Y^*\widetilde Y\,, \label{eq:4-11}
\end{equation}
using \(X^*X=\Sigma=Y^*Y\).  Note that \(G\succeq0\), being a Gram matrix.

Let \(x_i\) be the \(i\)-th column of \(\widetilde X\), and \(y_i\) the \(i\)-th column
of \(\widetilde Y\).  Then \(\widetilde X\widetilde Y^*=\sum_{i=1}^r x_iy_i^*\), so
by~\eqref{eq:4-10}, \(C=\sum_{i=1}^r x_iy_i^*\), which is~\eqref{eq:4-2}.

For~\eqref{eq:4-3}, read off the entries of \(G\):
\begin{equation}\label{eq:gram-entries}
(\widetilde X^*\widetilde X)_{ij}=x_i^*x_j\,,
\qquad
(\widetilde Y^*\widetilde Y)_{ij}=y_i^*y_j
\qquad\text{for all }i,j\in[r]\,.
\end{equation}
By~\eqref{eq:4-11} the two matrices both equal \(G\), so
\(x_i^*x_j=y_i^*y_j=g_{ij}\), which is~\eqref{eq:4-3}.

Finally,
\begin{equation}
y_i^*x_i=(\widetilde Y^*\widetilde X)_{ii}=(Q^*Y^*XQ)_{ii}=(Q^*KQ)_{ii}=\tau\,,
\end{equation}
shows \eqref{eq:4-4}.
This ends the proof.
\end{proof}

\section{Proof of Theorem A}\label{proof-of-the-rank-constrained-partial-trace-inequality}

\thmRankPT*
\begin{proof}
The strategy is as follows:
By~\eqref{eq:5-3}, the left-hand side of~\eqref{eq:7-1} equals
\(\|\Phi(C)\|^2\).
Expanding \(C\) as a balanced rank-\(r\)
decomposition we can write \(\Phi(C)=\sum_i\Phi(x_i y_i^*)\).
Its squared norm \(\|\Phi(C)\|^2\) can be written as a double sum of diagonal and off-diagonal terms,
which we bound separately from above.
The resulting upper bound plus an additional
sum of squares expression then equals the right-hand
side of~\eqref{eq:7-1}, proving the assertion.

\subsection{Step 1: balanced decomposition}\label{step-1-balanced-decomposition}

Let \(N:=d_1d_2\).
Apply the balanced decomposition in
Lemma~\ref{lem:balanced} to obtain vectors
\(x_1,\ldots,x_r,\ y_1,\ldots,y_r\in\C^N\) and a Hermitian positive semidefinite
matrix \(G=(g_{ij})_{i,j=1}^r\) such that
\begin{equation}\label{eq:balanced-props}
C=\sum_{i=1}^r x_i y_i^*,
\qquad
x_i^*x_j=y_i^*y_j=g_{ij}\quad(1\le i,j\le r),
\qquad
y_i^*x_i=\tau\quad(1\le i\le r)\,,
\end{equation}
where
\begin{equation}\label{eq:7-5}
\tau:=\frac{\tr(C)}{r}\,.
\end{equation}
We also define \(\delta_i:=g_{ii}=\|x_i\|^2=\|y_i\|^2\ge0\).

By linearity of \(\Phi\),
\begin{equation}\label{eq:7-8}
\Phi(C)=\sum_{i=1}^r\Phi(x_i y_i^*)\,.
\end{equation}
Then,
\begin{equation}\label{eq:double-sum}
\|\Phi(C)\|^2
=\sum_{i,j=1}^r\bigl\langle\Phi(x_i y_i^*),\Phi(x_j y_j^*)\bigr\rangle\,.
\end{equation}
Our aim in steps~2 and~3 in then to
bound the diagonal and off-diagonal terms appearing
in $\|\Phi(C)\|^2$.

\subsection{Step 2: diagonal terms}\label{step-1-diagonal-terms}

By Lemma~\ref{lem:rankonept}, \eqref{eq:balanced-props}, and the definition of \(\delta_i\),
\begin{equation}\label{eq:7-9}
\begin{aligned}
\|\Phi(x_i y_i^*)\|^2
&\le
\|x_i\|^2\|y_i\|^2+|y_i^*x_i|^2\\
&=\delta_i^2+|\tau|^2\,.
\end{aligned}
\end{equation}
\subsection{Step 3: cross terms}\label{step-2-cross-terms}

Fix \(i\ne j\). By Corollary~\ref{cor:crossedpol}, respectively the form given in Eq.~\eqref{eq:crossedpol-xy},
\begin{equation}\label{eq:7-10}
\bigl\langle\Phi(x_i y_i^*),\Phi(x_j y_j^*)\bigr\rangle
=
\left\langle
\Phi(y_jy_i^*),
\Phi(x_jx_i^*)
\right\rangle\,.
\end{equation}
By Cauchy--Schwarz in the direct-sum Hilbert space,
\begin{equation}\label{eq:7-11}
\bigl|\bigl\langle\Phi(x_i y_i^*),\Phi(x_j y_j^*)\bigr\rangle\bigr|
=\bigl|\bigl\langle\Phi(y_jy_i^*),\Phi(x_jx_i^*)\bigr\rangle\bigr|
\le
\|\Phi(y_jy_i^*)\|\,
\|\Phi(x_jx_i^*)\|\,.
\end{equation}

Now consider the rank-one operator \(y_jy_i^*\) appearing on the rhs of \eqref{eq:7-11}.
By \eqref{eq:rank1-trace}, \eqref{eq:rank1-norm}, and
then the definition of \(\delta_i\) in \eqref{eq:balanced-props},
\begin{align}
\|y_jy_i^*\|_F^2 &= \|y_j\|^2\|y_i\|^2 = \delta_j\delta_i\,, \label{eq:7-12}\\
\tr(y_jy_i^*) &= y_i^*y_j = g_{ij}\,. \label{eq:7-13}
\end{align}
By Lemma~\ref{lem:rankonept} therefore
\begin{equation}\label{eq:7-14}
\|\Phi(y_jy_i^*)\|^2
\le
\delta_i\delta_j+|g_{ij}|^2\,.
\end{equation}
The same computation for \(x_jx_i^*\) gives
\begin{equation}\label{eq:7-15}
\|\Phi(x_jx_i^*)\|^2
\le
\delta_i\delta_j+|g_{ij}|^2\,.
\end{equation}
Combining \eqref{eq:7-11}, \eqref{eq:7-14}, and \eqref{eq:7-15},
\begin{equation}\label{eq:7-16}
\begin{aligned}
\bigl|\bigl\langle\Phi(x_i y_i^*),\Phi(x_j y_j^*)\bigr\rangle\bigr|
&\le\|\Phi(y_jy_i^*)\|\,\|\Phi(x_jx_i^*)\|\\
&\le\sqrt{\delta_i\delta_j+|g_{ij}|^2}\;\sqrt{\delta_i\delta_j+|g_{ij}|^2}\\
&=\delta_i\delta_j+|g_{ij}|^2\,.
\end{aligned}
\end{equation}

These inequalities \eqref{eq:7-9} and \eqref{eq:7-16} finish the upper bound on the lhs of \eqref{eq:7-1}.

\subsection{Step 4: a combined upper bound}\label{step-3-sum-all-terms}

Now expand the squared norm of $\Phi(C)$. Using \eqref{eq:double-sum},
\begin{equation}\label{eq:7-17}
\begin{aligned}
\|\Phi(C)\|^2
&=\sum_{i,j=1}^r\bigl\langle\Phi(x_i y_i^*),\Phi(x_j y_j^*)\bigr\rangle\,\\
&=
\sum_{i=1}^r\|\Phi(x_i y_i^*)\|^2
+
2\operatorname{Re}\sum_{1\le i<j\le r}
 \bigl\langle\Phi(x_i y_i^*),\Phi(x_j y_j^*)\bigr\rangle\,.
\end{aligned}
\end{equation}
Using the upper bounds of \eqref{eq:7-9} and \eqref{eq:7-16},
\begin{equation}\label{eq:7-18}
\begin{aligned}
\|\Phi(C)\|^2
&\le
\sum_{i=1}^r(\delta_i^2+|\tau|^2)
+
2\sum_{1\le i<j\le r}
(\delta_i\delta_j+|g_{ij}|^2)\\
&=
\left(\sum_{i=1}^r \delta_i\right)^2
+r|\tau|^2
+2\sum_{1\le i<j\le r}|g_{ij}|^2\,.
\end{aligned}
\end{equation}
Above, we have regrouped the terms involving $\delta_i$ by
\(\sum_i\delta_i^2+2\sum_{i<j}\delta_i\delta_j = \bigl(\sum_i\delta_i\bigr)^2\).

\subsection{\texorpdfstring{Step 5: express \(\|C\|_F^2\) through the common Gram matrix}{Step 5: express \textbackslash\textbar C\textbackslash\textbar\_F\^{}2 via the Gram matrix}}\label{step-4-express-c_f2-through-the-common-gram-matrix}

As in the balanced
rank-\(r\) decomposition of Lemma~\ref{lem:balanced},
let \(\widetilde X\) have columns \(x_1\ \cdots\ x_r\) and
\(\widetilde Y\) have columns \(y_1\ \cdots\ y_r\).
Then
\begin{align}
 C = \widetilde X\widetilde Y^* \,, \quad\quad
 G = \widetilde X^*\widetilde X =\widetilde Y^*\widetilde Y\,,
\end{align}
where \(G=(g_{ij})_{ij}\).  Therefore
\begin{equation}\label{eq:7-19}
\begin{aligned}
\|C\|_F^2
&=\tr(C^*C)
=\tr(\widetilde Y\widetilde X^*\widetilde X\widetilde Y^*)
=\tr(\widetilde X^*\widetilde X\widetilde Y^*\widetilde Y)
=\tr(G^2)\,.
\end{aligned}
\end{equation}

Since \(G\) is Hermitian, \(g_{ji}=\overline{g_{ij}}\).
Hence
\begin{equation}\label{eq:7-20}
\begin{aligned}
\tr(G^2)
&=
\sum_{i=1}^r\sum_{j=1}^r g_{ij}g_{ji}\\
&=
\sum_{i=1}^r \delta_i^2
+
2\sum_{1\le i<j\le r}|g_{ij}|^2\,.
\end{aligned}
\end{equation}
Thus
\begin{equation}\label{eq:7-21}
\|C\|_F^2
=
\sum_{i=1}^r \delta_i^2
+
2\sum_{1\le i<j\le r}|g_{ij}|^2\,.
\end{equation}
\subsection{Step 6: the remaining term is a sum of squares}\label{step-5-the-remaining-error-is-a-sum-of-squares}

Taking the modulus squared of~\eqref{eq:7-5} gives \(|\tau|^2=|\tr(C)|^2/r^2\), and hence
\begin{equation}\label{eq:7-22}
\frac1r|\tr(C)|^2
=r|\tau|^2\,.
\end{equation}

Let us now show that \eqref{eq:7-1} holds.
To this end, subtract its lhs from its rhs.
Using \eqref{eq:7-21}, \eqref{eq:7-22}
and the upper bound on \(||\Phi(C)||^2\) of \eqref{eq:7-18},
we have
\begin{equation}\label{eq:7-23}
\begin{aligned}
&\quad r\|C\|_F^2+\frac1r|\tr(C)|^2
- \|\tr_1(C)\|_F^2
-
\|\tr_2(C)\|_F^2
\\
&=r\|C\|_F^2
+\frac1r|\tr(C)|^2
-\|\Phi(C)\|^2\\
&\ge
r\left(
\sum_i \delta_i^2
+2\sum_{i<j}|g_{ij}|^2
\right)
+r|\tau|^2
-
\left[
\left(\sum_i \delta_i\right)^2
+r|\tau|^2
+2\sum_{i<j}|g_{ij}|^2
\right]\\
\end{aligned}
\end{equation}
The last expression can then be regrouped using the Lagrange identity [Lemma~\ref{lem:lagrange}],
\begin{align}
&r\sum_i \delta_i^2
-
\left(\sum_i \delta_i\right)^2
+
2(r-1)\sum_{i<j}|g_{ij}|^2 \\
& \quad=\quad
\sum_{i<j}(\delta_i-\delta_j)^2
+
2(r-1)\sum_{i<j}|g_{ij}|^2 \quad \ge0\,.
\end{align}
This is a clearly non-negative expression, and thus proves
inequality \eqref{eq:7-1}.
This ends the proof.

\end{proof}

\section{Back to distillability}

We can now answer the first question.

\probFive*

 The answer is:

\corProblemFive*

\begin{proof}
Theorem~A with \(d_1=d_2=4\) and \(r=2\) states that every
\(C\in M_{16}(\C)\) of rank at most two satisfies
\(\|\tr_1(C)\|_F^2+\|\tr_2(C)\|_F^2\le2\|C\|_F^2+\tfrac12|\tr(C)|^2\).  This is
condition~\eqref{eq:crw-criterion} of
Theorem~\ref{thm:criterion-alt} at \(d=4\), \(\alpha=-\tfrac12\).
It follows that
\(\varrho(4,-\tfrac12)\) is two-copy undistillable.
This ends the proof.
\end{proof}

We can also answer the second main question:

\probGeneral*

The answer is:

\corAllDWerner*

\begin{proof}
We proceed by cases:

\noindent {\bf Case 1: \(-\tfrac{1}{d}\le \alpha\).}
\quad \(\varrho(d,\alpha)\) is separable~\cite{Werner1989}
and thus two-copy undistillable.

\smallskip
For \(\alpha<-\tfrac{1}{d}\), \(\varrho(d,\alpha)\) is NPT~\cite{Werner1989} and we
distinguish two further cases:

\smallskip
\noindent {\bf Case 2: \(-\tfrac12\le\alpha<-\tfrac{1}{d}\).}
\quad Note that this is only relevant for \(d\ge3\).
Let
\(C\in M_{d^2}(\C)\) have rank at most two, put \(t:=-\alpha\in(0,\tfrac12]\),
\begin{equation}
 a:=\|C\|_F^2\,, \quad b:=|\tr(C)|^2\,,
\end{equation}
By Theorem~A with \(d_1=d_2=d\) and \(r=2\),
\(\|\Phi(C)\|^2\le2a+b/2\), so it suffices to show that
\begin{equation}\label{eq:merged-bound-proof}
2a+\frac b2\ \le\ \frac at+tb\,,
\end{equation}
for all \(t:=-\alpha\in(0,\tfrac12]\).
This indeed holds due to,
\begin{equation}
 \bigl(\tfrac at+tb\bigr)-\bigl(2a+\tfrac b2\bigr)=(1-2t)\Bigl(\tfrac at-\tfrac b2\Bigr) \geq 0\,.
\end{equation}
Above we used that the first factor is nonnegative
due to \(t\le\tfrac12\).
The second factor is nonnegative due to
\eqref{eq:rank-trace-proof}, \(b\le2a\),
which gives \(bt\le2at\le a\le2a\), i.e.\ \(\tfrac{a}{t}\ge \tfrac{b}{2}\).  Hence~\eqref{eq:merged-bound-proof} holds.

Combining \eqref{eq:merged-bound-proof} with Theorem~A gives
\begin{equation}
 \|\Phi(C)\|^2\le \frac{a}{t}+tb=\tfrac1{|\alpha|}\|C\|_F^2+|\alpha|\,|\tr(C)|^2
\end{equation}
which is
condition~\eqref{eq:crw-criterion} of Theorem~\ref{thm:criterion-alt}.  Hence
\(\varrho(d,\alpha)\) is two-copy undistillable
for  \(-\tfrac12\le\alpha\).

\smallskip
\noindent {\bf Case 3: \(\alpha<-\tfrac12\).}
Then \(\varrho(d,\alpha)\) is
one-copy distillable~\cite{Horodecki1998}, and one-copy
distillability trivially implies distillability at every larger number of copies;
hence \(\varrho(d,\alpha)\) is two-copy distillable.

This ends the proof.
\end{proof}
As a consequence, the one and two-copy distillability regions
of $\varrho(d,\alpha)$ coincide.


\begin{thebibliography}{99}
\small

\bibitem{Werner1989}
R.~F. Werner,
``Quantum states with Einstein--Podolsky--Rosen correlations admitting a hidden-variable model,''
\emph{Physical Review A} \textbf{40}, 4277--4281 (1989).
\href{https://doi.org/10.1103/PhysRevA.40.4277}{doi:10.1103/PhysRevA.40.4277}.

\bibitem{Horodecki1998}
M.~Horodecki, P.~Horodecki, and R.~Horodecki,
``Mixed-state entanglement and distillation: Is there a `bound' entanglement in nature?,''
\emph{Physical Review Letters} \textbf{80}, 5239--5242 (1998).
\href{https://doi.org/10.1103/PhysRevLett.80.5239}{doi:10.1103/PhysRevLett.80.5239}.

\bibitem{Peres1996}
A.~Peres,
``Separability criterion for density matrices,''
\emph{Physical Review Letters} \textbf{77}, 1413--1415 (1996).
\href{https://doi.org/10.1103/PhysRevLett.77.1413}{doi:10.1103/PhysRevLett.77.1413}.

\bibitem{HorodeckiReduction1999}
M.~Horodecki and P.~Horodecki,
``Reduction criterion of separability and limits for a class of distillation protocols,''
\emph{Physical Review A} \textbf{59}, 4206--4216 (1999).
\href{https://doi.org/10.1103/PhysRevA.59.4206}{doi:10.1103/PhysRevA.59.4206}.

\bibitem{DurCiracLewensteinBruss2000}
W.~D\"ur, J.~I.~Cirac, M.~Lewenstein, and D.~Bru\ss,
``Distillability and partial transposition in bipartite systems,''
\emph{Physical Review A} \textbf{61}, 062313 (2000).
\href{https://doi.org/10.1103/PhysRevA.61.062313}{doi:10.1103/PhysRevA.61.062313};
arXiv:quant-ph/9910022.

\bibitem{Vianna2006}
R.~O. Vianna and A.~C. Doherty,
``Distillability of Werner states using entanglement witnesses and robust semidefinite programs,''
\emph{Physical Review A} \textbf{74}, 052306 (2006).
\href{https://doi.org/10.1103/PhysRevA.74.052306}{doi:10.1103/PhysRevA.74.052306}.

\bibitem{Pankowski2010}
\L{}.~Pankowski, M.~Piani, M.~Horodecki, and P.~Horodecki,
``A Few Steps More Towards NPT Bound Entanglement,''
\emph{IEEE Transactions on Information Theory} \textbf{56}(8), 4085--4100 (2010).
\href{https://doi.org/10.1109/TIT.2010.2050810}{doi:10.1109/TIT.2010.2050810}.

\bibitem{EltschkaSiewert2020}
C.~Eltschka and J.~Siewert,
``Joint Schmidt-type decomposition for two bipartite pure quantum states,''
\emph{Physical Review A} \textbf{101}, 022302 (2020).
\href{https://doi.org/10.1103/PhysRevA.101.022302}{doi:10.1103/PhysRevA.101.022302};
arXiv:1907.07976.

\bibitem{FiveOpen2022}
P.~Horodecki, \L{}.~Rudnicki, and K.~\.{Z}yczkowski,
``Five Open Problems in Quantum Information Theory,''
\emph{PRX Quantum} \textbf{3}, 010101 (2022).
\href{https://doi.org/10.1103/PRXQuantum.3.010101}{doi:10.1103/PRXQuantum.3.010101}; arXiv:2002.03233.

\bibitem{CostaRico2025}
P.~Costa Rico,
``New Partial Trace Inequalities and Distillability of Werner States,''
\emph{Letters in Mathematical Physics} \textbf{115}, 47 (2025).
\href{https://doi.org/10.1007/s11005-025-01935-y}{doi:10.1007/s11005-025-01935-y}; arXiv:2310.05726.

\bibitem{CostaRicoWolf2025}
P.~Costa Rico and M.~M. Wolf,
``Partial Trace Relations Beyond Normal Matrices,''
arXiv:2507.18278 [quant-ph] (2025).
\href{https://doi.org/10.48550/arXiv.2507.18278}{doi:10.48550/arXiv.2507.18278}.

\bibitem{Fillmore1969}
P.~A. Fillmore,
``On Similarity and the Diagonal of a Matrix,''
\emph{The American Mathematical Monthly} \textbf{76}(2), 167--169 (1969).
\href{https://doi.org/10.1080/00029890.1969.12000162}{doi:10.1080/00029890.1969.12000162}.

\bibitem{Parker1948}
W.~V. Parker,
``Sets of Complex Numbers Associated with a Matrix,''
\emph{Duke Mathematical Journal} \textbf{15}(3), 711--715 (1948).
\href{https://doi.org/10.1215/S0012-7094-48-01560-9}{doi:10.1215/S0012-7094-48-01560-9}.

\bibitem{HendersonSearle1981}
H.~V. Henderson and S.~R. Searle,
``The vec-permutation matrix, the vec operator and Kronecker products: a review,''
\emph{Linear and Multilinear Algebra} \textbf{9}(4), 271--288 (1981).
\href{https://doi.org/10.1080/03081088108817379}{doi:10.1080/03081088108817379}.

\bibitem{Audenaert2007}
K.~M.~R. Audenaert,
``Subadditivity of \(q\)-entropies for \(q>1\),''
\emph{Journal of Mathematical Physics} \textbf{48}, 083507 (2007).
\href{https://doi.org/10.1063/1.2771542}{doi:10.1063/1.2771542}.

\bibitem{HardyLittlewoodPolya1934}
G.~H. Hardy, J.~E. Littlewood, and G.~P\'olya,
\emph{Inequalities}, Cambridge University Press, Cambridge (1934);
2nd ed.\ (1952).

\bibitem{NielsenChuang2010}
M.~A. Nielsen and I.~L. Chuang,
\emph{Quantum Computation and Quantum Information}, 10th anniversary ed.,
Cambridge University Press, Cambridge (2010).
\href{https://doi.org/10.1017/CBO9780511976667}{doi:10.1017/CBO9780511976667}.

\bibitem{Watrous2018}
J.~Watrous,
\emph{The Theory of Quantum Information},
Cambridge University Press, Cambridge (2018).
\href{https://doi.org/10.1017/9781316848142}{doi:10.1017/9781316848142}.

\bibitem{HornJohnson1991}
R.~A. Horn and C.~R. Johnson,
\emph{Topics in Matrix Analysis}, Cambridge University Press, Cambridge (1991).

\bibitem{Bhatia1997}
R.~Bhatia,
\emph{Matrix Analysis}, Springer, New York (1997).

\end{thebibliography}
\end{document}